\theoremstyle{definition}
\newtheorem{example}{Example}[section]
\newtheorem{theorem}{Theorem}[section]
\newtheorem{proposition}{Proposition}[section]
\newtheorem{remark}{Remark}[section]
\newtheorem{assumption}{Assumption}
\newcommand{\R}{\mathbb{R}}
\newcommand{\N}{\mathbb{N}}
\newcommand{\bA}{\bm{A}}
\newcommand{\bD}{\bm{D}}
\newcommand{\bL}{\bm{L}}
\newcommand{\bw}{\bm{w}}
\newcommand{\bZ}{\bm{Z}}
\newcommand{\bY}{\bm{Y}}
\newcommand{\bone}{\bm{1}}
\newcommand{\bzero}{\bm{0}}
\newcommand{\Var}{\mathrm{Var}}
\newcommand{\TTE}{\mathrm{TTE}}
\newcommand{\M}{\mathrm{M}}
\newcommand{\WRI}{\mathrm{WRI}}
\newcommand{\CR}{\mathrm{CR}}
\newcommand{\MPR}{\mathrm{MPR}}
\newcommand{\dmax}{d_\text{max}}
\newcommand{\dmin}{d_\text{min}}
\newcommand{\ie}{{i.e.}}
\newcommand{\Ber}{\mathrm{Bernoulli}}
\newcommand{\iidsim}{\overset{iid}{\sim}}
\newcommand{\htau}{\hat{\tau}}
\newcommand{\outnode}{\mathcal{O}}
\newcommand{\innode}{\mathcal{I}}
\newcommand{\toutnode}{\tilde{\mathcal{O}}}
\newcommand{\tinnode}{\tilde{\mathcal{I}}}
\title{\bf Adaptive Weighted Random Isolation (AWRI): a simple design to estimate causal effects under network interference}
\author[a]{Changhao Shi\textsuperscript{*}}
\author[b]{Haoyu Yang\textsuperscript{*}}
\author[c]{Yichen Qin}
\author[a]{Yang Li\textsuperscript{\dag}}
\affil[a]{Center for Applied Statistics and School of Statistics, Renmin University of China}
\affil[b]{Biostatistics, Harvard T.H. Chan School of Public Health, Harvard University}
\affil[c]{Department of Operations, Business Analytics, and Information Systems, University of Cincinnati}
\date{\today}
\begin{document}
\maketitle

\onehalfspacing

\vspace{-1cm}
\begin{abstract}

    Recently, causal inference under interference has gained increasing attention in the literature. In this paper, we focus on randomized designs for estimating the total treatment effect (TTE), defined as the average difference in potential outcomes between fully treated and fully controlled groups. We propose a simple design called weighted random isolation (WRI) along with a restricted difference-in-means estimator (RDIM) for TTE estimation.  Additionally, we derive a novel mean squared error surrogate for the RDIM estimator, supported by a network-adaptive weight selection algorithm. This can help us determine a fair weight for the WRI design, thereby effectively reducing the bias. Our method accommodates directed networks, extending previous frameworks. Extensive simulations demonstrate that the proposed method outperforms nine established methods across a wide range of scenarios.

\end{abstract} 

\medskip
\noindent \textbf{Keywords:} Causal inference, interference,  randomized design, total treatment effect.

\newpage
\onehalfspacing

\section{Introduction}

Causality is a fundamental concept in statistical research. Over the past few decades, a wide range of methods have been developed to estimate causal effects across a variety of settings. For a review, see \cite{pearl2009causality, ImbensRubin2015,ding2018causal, ding2024first}. Most of these methods are grounded in the assumption of no interference, commonly referred to as the stable unit treatment value assumption (SUTVA), which posits that the treatment assignment of one unit does not affect the outcomes of others \citep{cox1958planning, Rubin1980, ImbensRubin2015}. However, this assumption is often unrealistic in many real-world scenarios, such as the evaluation of  public health interventions \citep{vanderweele2012mapping, alexandria2021}, the study of economic policies \citep{cai2015social, leung2020}, the public policy interventions \citep{paluck2016changing, egami2021spillover, grossi2020synthetic} and online A/B tests \citep{basse2018, saint2019using, liu2022adaptive}, etc, where units interact through social networks. To address these complexities, numerous new methods have emerged in recent years \citep{HudgensHalloran2008, AronowSamii2017, BasseFeller2018, karwa2018systematic, hu2022average, Leung2022, gao2023causal, savje2024causal}.

In this paper, we focus on randomized experiment design, often considered the ``gold-standard'' of causal inference \citep{fisher1935}. Among the various types of causal effects under interference, our primary concern is the total treatment effect (TTE), defined as the average difference in potential outcomes between fully treated and fully controlled groups. It serves as a measure of the overall impact of a new policy or intervene \citep{UganderKarrerBackstrom2013, leung2022a, yu2022estimating, cortez2022staggered, ugander2023randomized, viviano2023causal, cai2023independent, cortez2024combining}. 

There are two main lines of methods to estimate causal effects through randomized design: one based on inverse probability weighted estimators (IPW) \citep{HorvitzThompson1952, Hajek1971} and the other based on difference-in-means estimators (DIM). The former inherits the naturally low bias of classical IPW estimators but this comes at the expense of high variance due to units' low exposure probabilities when interference is strong. The latter retains the simplicity and low variance of classical estimators but suffers from high bias in the presence of strong interference. Consequently, there are two totally different directions to design randomization mechanisms: the first aims to reduce variance caused by low exposure probabilities \citep{UganderKarrerBackstrom2013, BasseFeller2018, Leung2022, liu2022adaptive, yu2022estimating, ugander2023randomized, gao2023causal}, while the second focuses on minimizing bias introduced by interference \citep{jagadeesan2020, leung2022a, viviano2023causal, cai2023independent}. 

Each of these methods has its own advantages and drawbacks, and currently, no single method can confidently claim superiority over others. Although some new methods, such as cluster-based randomized designs, seem more reasonable than the original ones, their theoretical guarantees remain incomplete due to the complexity of dependencies. Only in the simplest Bernoulli design, where treatments are assigned by repeatedly tossing an uneven coin, can IPW-type estimators be rigorously analyzed, as shown in \cite{Leung2022} and \cite{gao2023causal}, though under quite stringent assumptions. Cluster-based randomized designs aim to reduce interference between clusters by properly partitioning networks into clusters. By treating each cluster as a new ``unit'' and assigning treatments at the cluster level, one can expect a reduction in bias or variance, depending on the choice of estimators. The graph cluster randomization (GCR) design proposed by \cite{UganderKarrerBackstrom2013}, the randomized graph cluster randomization (RGCR) proposed by \cite{ugander2023randomized}, the causal clustering (CC) design proposed by \cite{viviano2023causal}, and the independent-set (IS) design proposed by \cite{cai2023independent} are four typical examples.

The GCR uses ``3-net clustering'' to reduce variance by improving the exposure probabilities of high-degree units, while the RGCR adds extra randomness during the clustering stage, allowing complete randomization (CR) in the subsequent stage, further reducing variance. However, both designs require exposure probabilities to be estimated through simulations, which is computationally expensive. The CC design seeks to find the optimal clustering by minimizing a novel objective function that balances bias and variance, but it requires extensive prior information to tune parameters, making it challenging to implement in practice. The IS design partitions the network into an independent set, where units are used to construct estimators, and an auxiliary set, where units provide specific interference to those in the independent set, effectively reducing the bias. This design is simple and effective in many settings, but it is limited by its strong dependence on the correct specification of the interference model, which is often difficult to satisfy in practice.

All the methods mentioned above are designed for undirected networks, which is generally not realistic. Additionally, there seems to be a trend towards increasingly complex designs, which makes interpretation and implementation more difficult. In this paper, to address these issues, we introduce a simple design with a straightforward DIM estimator that accommodates directed networks and outperforms other methods across a wide range of scenarios.

To introduce our methods, we start with a simple scenario where units have been divided into clusters, and those from different clusters are assumed not to interfere with each other. This assumption is known as partial interference \citep{HudgensHalloran2008}, and many effective results have been achieved under this framework \citep{BasseFeller2018}. Inspired by this situation, we aim to create a similar environment for some units. These units must satisfy two conditions: (i) they can be assigned any exposure (any treatment assignments from themselves and their neighbors) without influencing each other; (ii) they are representative of the entire population. Combining these two properties allows us to use a simple difference-in-means estimator restricted to these units to estimate the TTE, with the expectation that the convergence rate of its mean square error (MSE) will be comparable to that in the partial interference scenario.

For the first condition, we need to ``isolate'' units, ensuring that the isolated units are sufficiently distant from each other. To address this, we extend the RGCR design of \cite{ugander2023randomized} to accommodate directed networks, calling this design weighted random isolation (WRI). The second condition is more challenging. Technically, we want to obtain a random sample that is representative of the population. We begin with a simple fact: a simple random sample is always representative. However, it is not feasible here because it contradicts the first condition (which manages the complex dependencies caused by interference and thus cannot be compromised). Therefore, we aim to generate a random sample that is ``similar'' to a simple random sample. For instance, this sample should be able to balance network-based covariates, such as the degrees of units. We achieve this by adjusting the weight of the WRI, which controls the sample distribution. By incorporating network-based information into the potential outcomes, we derive a novel MSE surrogate. By minimizing this surrogate, we can obtain the optimal weight for the WRI design. Finally, inspired by a simple example, we propose some network-based weights as candidate weights which perform well in practice. Since the weight selection process is network-adaptive, we refer to the entire design as adaptive weighted random isolation (AWRI).

The remainder of this paper is organized as follows: Section \ref{sec: notation} outlines the basic notations and assumptions in network experiments. Section \ref{sec: random isolation} proposes the random isolation (RI) design and restricted estimators. In section \ref{sec: representative isolated set}, we introduce the weighted random isolation (WRI) design and present the adaptive weight selection algorithm based on minimizing a novel MSE surrogate. Section \ref{sec: simulation} shows the numerical results. Section \ref{sec: conclusion} offers concluding remarks and suggests directions for future research. All proofs and additional results are included in the Appendix.

\section{Setup}
\label{sec: notation}

Throughout this work we consider a finite population setting, where all potential outcomes are unknown but fixed and the treatment assignment is the only source of randomness, also known as the design-based framework \citep{ImbensRubin2015, AronowSamii2017, AbadieAtheyImbens2020, Leung2022, gao2023causal}. In the presence of interference, units interact each other through a directed, unweighted network $G=(V, E)$, where $V=[n]=\{1,\dots,n\}$ denotes the set of units and $E$ denotes the set of edges that represent the underlying interference, i.e. $(i,j)\in E$ if unit $i$ interfere with unit $j$. Equivalently, we describe it by an adjacency matrix $\bA=(A_{ij})_{n\times n}$ with the $(i, j)$th entry $A_{ij} \in \{0, 1\}$ indicating the connection from unit $i$ to $j$, i.e. $A_{ij}=1$ if $(i,j)\in E$. Let $\innode_i$ denote the set of units that interfere with unit $i$ and $\outnode_i$ denote the set of units that unit $i$ interferes with, i.e., $\innode_i=\{j,(j,i)\in E\}$ and $\outnode_i=\{j,(i,j)\in E\}$. Let $\tilde{\innode}_i$ and $\tilde{\outnode}_i$ denote $\innode_i\cup \{i\}$ and $\outnode_i\cup \{i\}$, respectively. We use $d_i$ denote the in-degree of unit $i$, i.e., $d_i = |\innode_i|$. In an undirected network, we simply term it as the degree. We assume the network is fixed and known to the researcher.

The treatment assignment is denoted by $\bZ = (Z_i)_{i=1}^n \in \{0,1\}^n$, where each $Z_i$ indicates whether unit $i$ is assigned to the treatment. The potential outcome of unit $i$, denoted as $Y_i(\bZ)$, represents the outcome when the treatment assignments for all $n$ units are given by $\bZ$. This implies that $Y_i(\bZ)$ depends not only on $Z_i$, but also on the treatment assignments of all other units, capturing the essence of ``interference''.

Throughout this paper, we focus on the total treatment rffect (TTE), which is defined as the average difference in potential outcomes between fully treated and fully controlled groups. Specifically, the TTE is defined as
\begin{equation}
    \tau = \frac{1}{n}\sum_{i=1}^n\tau_i = \frac{1}{n}\sum_{i=1}^n\left(Y_i(\bone)-Y_i(\bzero)\right),
\end{equation}
where $\bone= \bone_n =(1)_{i=1}^n$ and $\bzero=\bzero_n = (0)_{i=1}^n$ are vectors of length $n$. The TTE cannot be directly estimated because $\{Y_i(\bone),i\in [n]\}$ and $\{Y_i(\bzero),i\in [n]\}$ can not be observed simultaneously. To address this issue, we introduce the Full Neighborhood Interference (FNI) Assumption, which is a popular choice in the literature \citep{UganderKarrerBackstrom2013,ugander2023randomized,yu2022estimating,viviano2023causal,cai2023independent}.

\begin{assumption}[Full Neighborhood Interference (FNI)]
    \label{asu: fni}
For all $i\in [n]$, $\bZ,{\bZ}^\prime\in\{1,0\}^n$, if $Z_j=Z^{\prime}_j$ for each $j \in \tinnode_i$, then $Y_i(\bZ)=Y_i({\bZ}^{\prime})$.
\end{assumption}
 
Assumption \ref{asu: fni} states that the potential outcomes of a unit depend only on the treatment assignments of its neighborhood and itself. We use ${\bZ}_{\tinnode_i}=(Z_i,i\in \tinnode_i)$ indicate a sub-vector of $\bZ$. We call every possible value of ${\bZ}_{\tinnode_i}\in \{0,1\}^{|\tinnode_i|}$ is an ``exposure'' of unit $i$ \citep{AronowSamii2017}. Under this assumption, one can observe $Y_i(\bone)$ if ${\bZ}_{\tinnode_i}=\bone_{|\tinnode_i|}$ and observe $Y_i(\bzero)$ if ${\bZ}_{\tinnode_i}=\bzero_{|\tinnode_i|}$, which makes TTE can be estimated under specific designs.

\section{Random isolation and restricted estimators}
\label{sec: random isolation}

To motivate our method, we begin with a simple scenario where $n$ units are divided into $K$ clusters, with interference occurring only within clusters, referred to as partial interference \citep{HudgensHalloran2008}. In this situation, one can implement the two-stage randomization and then estiamte causal effects with standard inverse probability weighted (IPW) estimators \citep{BasseFeller2018}. As a result, the mean squared error (MSE) of the estiamtors is of the same order as $O(K^{-1})$ instead of the usual $O(n^{-1})$ (Proposition 4.1 and Proposition 5.1 in \cite{BasseFeller2018}). This suggests that, in the presence of interference, the number of independent clusters is a key quantity regarding the large sample properties. We should consider it as the effective sample size, which is much smaller than the actual sample size $n$. 

However, under more complex interference, natural clusters are typically absent, which motivates us to artificially partition the network to generate clusters that function similarly to those under partial interference. The artificial clusters should have the following merits: each cluster should contain at least one unit $i$ such that $\tinnode_i$ is included in the cluster. This means that under Assumption \ref{asu: fni}, all possible exposures of unit $i$, ${\bZ}_{\tinnode_i}\in \{0,1\}^{|\tinnode_i|}$, can be assigned solely within this cluster, independent of the treatment assignments of units in other clusters. This aligns with the spirit of clusters in partial interference. A toy example is shown in Figure \ref{fig: toyexampleRI}.

\begin{figure}[!htb]
    \centering
    \includegraphics[width=0.95\textwidth]{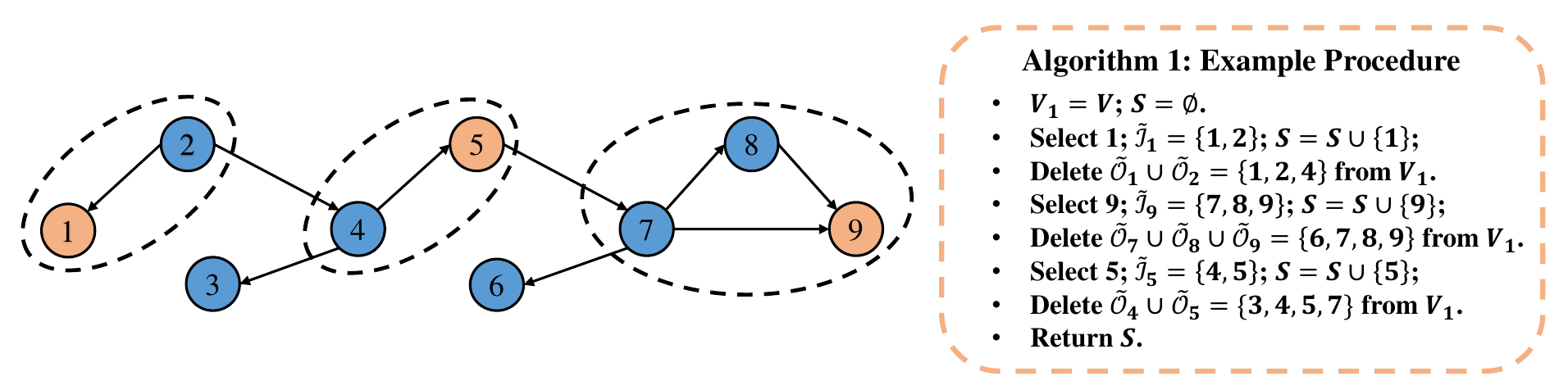}
    \caption{A toy example to illustrate the random isolation method. The left part shows the spirit of clusters under general interference, where dotted ovals indicate clusters and orange units indicate isolated units. The right part is an example procedure of Algorithm \ref{alg: random isolation}.} 
    \label{fig: toyexampleRI}
\end{figure}

To achieve this, we initialize $S=\varnothing$ and $V_1 = V$. Select one unit $i$ from $V_1$ uniformly at random and add it to $S$. Then, remove $\tinnode_i$ from $V_1$. This process is repeated until no unit is left in $V_1$. An example procedure is shown in Figure \ref{fig: toyexampleRI}. In this way, units in $S$ are sufficiently ``isolated'' such that their exposures can be assigned independently. We call this procedure random isolation (RI), outlined in Algorithm \ref{alg: random isolation}. The output of RI, $S$, is called isolated set and elements of $S$ are called ``isolated units''. This procedure is an extension of the ``uniform-3-net clustering'' in \cite{ugander2023randomized} and ``independent-set design'' in \cite{cai2023independent}.

\begin{algorithm}[!htb]
    \caption{Random Isolation (RI)}
    \label{alg: random isolation}
    \KwInput{network $G=(V, E)$}
    \KwOutput{isolated set $S$}
    $S\leftarrow \varnothing$\\
    $V_1 \leftarrow V$\\
    \While{$|V_1|>0$}
    {
        select $i$ uniformly at rondom from $V_1$ \label{line: uniform}\\
        $S\leftarrow S\cup\{i\}$\\
        $V_1\leftarrow V_1\setminus \cup_{l\in \tinnode_i}\toutnode_l$
    }
    \Return{$S$}
\end{algorithm}

Under Assumption \ref{asu: fni}, the advantage of RI design is apparent: the isolated units do not interfere each other at all and if we only care about the TTE of sub-population $S$, then we can estimate them as if there is no interference. Specifically, define the TTE of $S$ as
\begin{equation}
\label{eq: TTE of S}
\tau_S = \frac{1}{|S|}\sum_{i\in S}\tau_i = \frac{1}{|S|}\sum_{i=1}^n\tau_i I(i\in S) = \frac{1}{|S|}\sum_{i=1}^n \left(Y_i(\bone)-Y_i(\bzero)\right)I(i\in S).
\end{equation}
Then we can proceed complete randomization at the cluster level in 2 steps: (i) implement complete randomization (CR) on the isolated set $S$ with treatment group size equal to $\lfloor|S|/2\rfloor$ to get the treatment group $S_1$, written as $S_1 = \CR(S,\lfloor|S|/2\rfloor)$; (ii) for each unit in $S_1$, assign treatment to all units in $S_1$ and its 1-neighborhood. The procedure is outlined in Algorithm \ref{alg: cr}.

\begin{algorithm}[!htb]
    \DontPrintSemicolon
    \caption{Complete randomization at the cluster level}
    \label{alg: cr}
    \KwInput{isolated set $S\subset V$}
    \KwOutput{treatment assignment $\bZ\in\{1,0\}^n$}
    $\bZ \leftarrow \bzero_n$ \\
    $S_1 \leftarrow \CR(S,\lfloor|S|/2\rfloor)$ \label{line: cr} \\
    \For{$i \in S_1$}{ \label{line: cr2}
        ${\bZ}_{\tinnode_i} \leftarrow \bone_{|\tinnode_i|}$ \label{line: cr3}
    }
    \Return{$\bZ$}
\end{algorithm}

Then we get observations $\{Z_i,Y_i,i\in S\}$, where $Y_i = Y_i(\bone)Z_i + Y_i(\bzero)(1-Z_i)$. And we can borrow the traditional difference-in-means estimator $\hat{\tau}$ to estimate $\tau_S$, where
\begin{equation}
\label{eq: dim estimator}
\hat{\tau} = \frac{1}{|S_1|}\sum_{i\in S_1}Y_i Z_i - \frac{1}{|S_0|}\sum_{i\in S_0}Y_i (1-Z_i).
\end{equation}
The classical properties of $\hat{\tau}$ are summarized in Theorem \ref{theo: classic resutls of diff} in Appendix \ref{app: additional results}. This theorem demonstrates that we can reliably estimate $\tau_S$ for any sub-population $S$ selected by random isolation (RI) method. Additionally, one can replace complete randomization with matched-pairs randomization (MPR) in Line \ref{line: cr} of Algorithm \ref{alg: cr}, and substitute the difference-in-means estimator $\hat{\tau}$ with the matched-pairs estimator $\hat{\tau}_m$ to imrpove the performance of finite population. Details are discussed in Appendix \ref{app: matched}.

However, our primary interest lies in estimating the TTE of the original population, $\tau$, which may differ significantly from $\tau_S$. The intuition is that if $S$ is a simple random sample of $[n]$ with size $K$, then $\{\tau_i, i \in S\}$ is representative of $\{\tau_i, i \in [n]\}$. Thus, under the assumption of bounded potential outcomes, $\MSE(\tau_S)$ can be controlled by $O(K^{-1})$, which is the same order as $\E_S(\Var(\hat{\tau}|S))$ \citep{sarndal2003model}. This implies $\MSE(\htau) = O(K^{-1})$, comparable to the rate under partial interference. Details are presented in Theorem \ref{theo: simple random sample} in Appendix \ref{app: simple random sample}. This observation motivates us to select a representative sub-population $S$ that behaves as much as possible like a simple random sample. The next section will address this issue. To conclude this section, we provide some convenient notations.

In the rest of this paper, we use $\CR(S, n_1)$ indicate a complete randomization which maps a set $S$ and a number $n_1$ to a treatment group $S_1\subset S$ with size $n_1$. $\MPR(S, n_1)$ represents matched-pairs randomization, defined analogously. We name their corresponding estimators as restricted difference-in-means estimator (rdim) and restricted matched estimator (rmat) because both of them only use data restricted on the isolated set $S$.

\section{Choosing representative isolated sets}
\label{sec: representative isolated set}

\subsection{Weighted Random Isolation}

As mentioned in the last section, ideally, we hope the sub-population $S$ is ``like'' a simple random sample, which well represents the whole population. However, in the presence of complex interference, this cannot be achieved: some units have low probabilities of being sampled into $S$ due to their high degrees, and some units cannot be sampled into $S$ simultaneously because of their close proximity. 

To address this problem, we develop a novel sampling technique, adaptive weighted random isolation (AWRI), consisting of two parts: weighted random isolation (WRI, Algorithm \ref{alg: weighted random isolation}) and adaptive weight selcetion (Algorithm \ref{alg: adaptive weight selection}). We introduce WRI first. The basic idea is to assign each unit a specific probability of being sampled into $S$ in each round of samplings, i.e. substituting the uniform random mechanism with a more controlled random mechanism in line \ref{line: uniform} of Algorithm \ref{alg: random isolation}. By this way, researchers can adjust the distribution of $S$ to make it more representative.

However, although we can control the sampling probability in each round, we cannot precisely manage it over the entire process. The final sampling probability is determined by a complex stochastic process, which is hard to analyze. Nevertheless, we can still influence the overall ``trend'': if one unit has a higher probability of being sampled than another in each round, it will also have a higher probability of being sampled throughout the entire process. This can be easily achieved by a roulette wheel selection procedure \citep{LIPOWSKI20122193}. First, for $j\in[n]$, assign a weight $w_j$ to unit $j$. Set $V_1=V$. Then in each round, for each $j\in V_1$, sample unit $j$ with probability $w_j/\sum_{l\in V_1}w_l$. After sampling a unit $i$, remove $\cup_{l\in \tinnode_i}\toutnode_l$ from $V_1$. Repeat this process until $V_1$ is empty. The whole procedure is outlined in Algorithm \ref{alg: weighted random isolation}, where an equivalent version is provided, utilizing the properties of beta distribution (Proposition \ref{prop: beta}). This procedure is an extension of the ``weighted-3-net clustering'' in \cite{ugander2023randomized}. From now on, we use $\WRI(G, \bw)$ as a random function: $\WRI(G, \bw)$ maps the input network $G$ and weight $\bw$ into a random isolated set $S^{\bw}$.

\begin{proposition}
    \label{prop: beta}
    For independent random variables $X_i \sim \mathrm{Beta}(w_i, 1)$, $X_j \sim \mathrm{Beta}(w_j, 1)$, we have $\mathrm{P}(X_i > X_j) = {w_i}/{(w_i + w_j)}$ and $\max\{X_i, X_j\} \sim \mathrm{Beta}(w_i + w_j, 1)$.
\end{proposition}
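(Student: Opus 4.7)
The plan is to leverage the explicit form of the $\mathrm{Beta}(w,1)$ distribution, whose density on $[0,1]$ is $f_w(x)=w x^{w-1}$ and whose cumulative distribution function is $F_w(x)=x^w$. Both claims then reduce to one-line integrations, and no deeper probabilistic machinery is needed. The only mild care required is to keep track of independence when computing the joint probability.

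For the first identity, I would condition on $X_i=x$ and integrate out. By independence,
\[
\mathrm{P}(X_i>X_j)=\int_0^1 \mathrm{P}(X_j<x)\,f_{w_i}(x)\,dx=\int_0^1 x^{w_j}\cdot w_i x^{w_i-1}\,dx=w_i\int_0^1 x^{w_i+w_j-1}\,dx=\frac{w_i}{w_i+w_j}.
\]
For the second identity, I would compute the CDF of the maximum directly, again using independence:
\[
\mathrm{P}(\max\{X_i,X_j\}\le x)=\mathrm{P}(X_i\le x)\,\mathrm{P}(X_j\le x)=x^{w_i}\cdot x^{w_j}=x^{w_i+w_j},
\]
which is exactly the CDF of $\mathrm{Beta}(w_i+w_j,1)$, proving the distributional claim.

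There is really no main obstacle here; the proposition is essentially a computational lemma packaged to justify the equivalent reformulation of Algorithm \ref{alg: weighted random isolation} (it lets one replace a sequential roulette-wheel draw by a single draw of independent Beta variables and take maxima). The only thing worth flagging in the write-up is to state explicitly that these Beta-$(w,1)$ CDFs are simply power functions on $[0,1]$, so that readers not fluent with Beta distributions can follow the identities without consulting an external reference.
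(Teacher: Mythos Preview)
Your argument is correct. Note that the paper does not actually supply a proof of this proposition; it is stated as a known fact to justify the Beta-based reformulation in Algorithm~\ref{alg: weighted random isolation}. Your computation---using that $\mathrm{Beta}(w,1)$ has CDF $F_w(x)=x^w$ on $[0,1]$, then integrating for the first claim and multiplying CDFs for the second---is the standard elementary derivation and would serve perfectly well as the omitted proof.
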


\begin{algorithm}[!htb]
    \caption{Weighted Random Isolation (WRI)}\label{alg: weighted random isolation}
    \KwInput{network $G=(V, E)$, weight $\bm w \in \mathbb{R}^n_{\scriptscriptstyle \geq 0}$}
    \KwOutput{isolated set $S$}
    \For{$i \in V $ }{
    $X_i \overset{iid}{\sim} \mathrm{Beta}(w_i,1)$\\
    }
    $S\leftarrow \varnothing$\\
    $V_1\leftarrow V$\\
    \While{$|V_1|>0$}
    {
        $i\leftarrow\arg\max\{X_i, i\in V_1 \}$ \\
        $S\leftarrow S\cup\{i\}$\\
        $V_1\leftarrow V_1\setminus \cup_{l\in \tinnode_i}\toutnode_l$
    }
    \Return{$S$}
\end{algorithm}

\subsection{Weight selection based on a novel MSE surrogate}

Thus far, we've reduced the original problem of assigning each unit a proper probability to a seemingly simpler one: selecting an appropriate weight vector $\bw$ for WRI design. This is a typical optimization problem. To address it, we propose a novel mean squared error (MSE) surrogate function. By minimizing this function, we can identify an effective weight for the WRI design. To motivate our approach, we first introduce some key assumptions.

\begin{assumption}[Potential Outcomes Decomposition] \label{asu: potential outcomes decomposition} 
There exist functions $f_1$ and $f_0$ (not necessarily unique) such that, for all $i \in [n]$ and $n \in \N_{+}$, $Y_i(\bone) = f_1(d_i) + \varepsilon_{1i}$ and $Y_i(\bzero) = f_0(d_i) + \varepsilon_{0i}$, where $d_i$ is the in-degree of the $i$th unit and $\varepsilon_{1i}, \varepsilon_{0i} \in \R$.
\end{assumption}

Technically, Assumption \ref{asu: potential outcomes decomposition} just states a formal decomposition of potential outcomes and does not impose any real restrictions on $Y_i(\bone)$ and $Y_i(\bzero)$. For example, to make it hold, one can simply set $f_1=f_0\equiv 0$ and $\varepsilon_{1i}=Y_i(\bone)$, $\varepsilon_{0i}=Y_i(\bzero)$. However, this decomposition is nontrivial under many realistic settings, where potential outcomes exhibit specific patterns related to in-degrees such as \cite{ugander2023randomized} and \cite{parker2017optimal} assume potential outcomes are linear with respect to the absolute number of treated neighbors.

\begin{assumption}[Bounded Potential Outcomes]\label{asu: bounded potential outcomes}
    There exist positive constants $c_1$ and $c_2$ such that
    \begin{itemize} 
        \item[(i)] for all $d \in \N$, $|f_l(d)|\leq c_1$, $l=0, 1$;
        \item[(ii)] for all $i\in [n]$, $n\in \N_{+}$ and every $S \subseteq [n]$, $|\frac{1}{|S|}\sum_{i\in S}\varepsilon_{li}|\leq \frac{1}{\sqrt{|S|}}c_2$, $l=0, 1$.  
    \end{itemize}
    \end{assumption}

\begin{assumption}[Bounded Potential Outcomes*]
\label{asu: normal bounded outcomes}
There exists positive constant $c$ such that
$\left|Y_i({z})\right|\leq c<\infty$, for all $ n \in \N_{+}, i \in [n], {z} \in\{0,1\}^n$.
\end{assumption}

Assumption \ref{asu: bounded potential outcomes} is a slightly stronger version of the normal boundness assumption stated in Assumption \ref{asu: normal bounded outcomes} (Assumption 3 in \cite{Leung2022}, \cite{gao2023causal} and \cite{viviano2023causal}, etc). Here we bound ``pattern'' terms $\{f_l(d),l\in\{0,1\},d\in \N\}$ and ``noise'' terms $\{\varepsilon_{li},l\in{0,1}, i\in [n],n\ n\in \N\}$ separately.

\begin{remark}\label{rem: boundness}
    Assumption \ref{asu: bounded potential outcomes} is actually a finite population version of normality assumption of ``noise'' terms $\{\varepsilon_{li}:l\in{0,1}, i\in [n],n\ n\in \N\}$. Suppose $\varepsilon_{li}|S \overset{iid}{\sim} {N}(0,\sigma^2)$ for all $i\in[n], n\in\N_+, l=0,1$ and for each $S\subseteq [n]$. Then given $S$, $\frac{1}{|S|}\sum_{i\in S}\varepsilon_{li} \sim {N}(0,\sigma^2/|S|)$, which means $\frac{1}{\sqrt{|S|}}\sum_{i\in S}\varepsilon_{li} = O_p(1)$. Here, analogously, under finite population situation, we suppose $\frac{1}{\sqrt{|S|}}\sum_{i\in S}\varepsilon_{li} = O(1)$. 
\end{remark}

Let $\dmax$ indicates the maximum of in-degrees of network $G$. Let $P_{S_1^{\bw}}$, $P_{S_0^{\bw}}$ and $P_G$ indicate probabilistic mass functions (PMF) of $\{d_i, i\in S_1^{\bw}\}$, $\{d_i,i\in S_0^{\bw}\}$ and $\{d_i, i \in [n]\}$, respectively. $P_{S_1^{\bw}}$ and $P_{S_0^{\bw}}$ are random because of the randomness of $S_1^{\bw}$ and $S_0^{\bw}$. The superscript $\bw$ indicates they depend on the weight $\bw$ which is used to implement WRI algorithm. We use $\htau_{\bw}$ indicate the restricted difference-in-means estimator under $\text{WRI}(G, \bw)$ design.

Now, with these preparations, we can give $\MSE(\hat{\tau}_{\bw})$ an intuitive upper bound which can help us understand the key point of this problem and can motivate an heuristic algorithm to choose optimal weight.

\begin{theorem}\label{theo: mse}
    Suppose that Assumptions \ref{asu: fni}, \ref{asu: potential outcomes decomposition}, \ref{asu: bounded potential outcomes} hold. Then
    \begin{normalsize}
        \begin{equation*}
            \MSE(\hat{\tau}_{\bw}) \leq 2c_1^2(d_\text{max}+2)^2\left(\E||P_{S_1^{\bw}} - P_G||_2^2 + \E||P_{S_0^{\bw}} - P_G||_2^2\right) + 16c_2^2\left(\E{|S_1^{\bw}|}^{-1}+\E{|S_0^{\bw}|}^{-1}\right).
        \end{equation*}
    \end{normalsize}
\end{theorem}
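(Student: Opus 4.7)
The plan is to decompose $\hat{\tau}_{\bw}-\tau$ into pieces that line up with the two summands in the stated bound, and to control each by an elementary inequality. First I would exploit the isolation structure enforced by WRI together with FNI. The removal rule in Algorithm \ref{alg: random isolation} guarantees $\tinnode_i\cap\tinnode_{i'}=\emptyset$ for any two distinct isolated units $i,i'\in S^{\bw}$, so the assignment step in Algorithm \ref{alg: cr} sets $\bZ_{\tinnode_i}=\bone_{|\tinnode_i|}$ precisely when $i\in S_1^{\bw}$ and leaves $\bZ_{\tinnode_i}=\bzero_{|\tinnode_i|}$ when $i\in S_0^{\bw}$. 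Under Assumption \ref{asu: fni} this collapses the observed outcome to $Y_i=Y_i(\bone)$ on $S_1^{\bw}$ and $Y_i=Y_i(\bzero)$ on $S_0^{\bw}$, so
\[
\hat{\tau}_{\bw}=\frac{1}{|S_1^{\bw}|}\sum_{i\in S_1^{\bw}}Y_i(\bone)-\frac{1}{|S_0^{\bw}|}\sum_{i\in S_0^{\bw}}Y_i(\bzero),
\]
and hence $\hat{\tau}_{\bw}-\tau=U_1-U_0$, where $U_1=|S_1^{\bw}|^{-1}\sum_{i\in S_1^{\bw}}Y_i(\bone)-n^{-1}\sum_{i=1}^nY_i(\bone)$ and $U_0$ is defined analogously with $Y_i(\bzero)$.

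Next I would substitute Assumption \ref{asu: potential outcomes decomposition} into each $U_l$ to write $U_l=A_l+B_l$. The pattern piece admits the clean reindexing
\[
A_l=\frac{1}{|S_l^{\bw}|}\sum_{i\in S_l^{\bw}}f_l(d_i)-\frac{1}{n}\sum_{i=1}^n f_l(d_i)=\sum_{d=0}^{\dmax}f_l(d)\bigl(P_{S_l^{\bw}}(d)-P_G(d)\bigr),
\]
which, combined with $|f_l(d)|\le c_1$ on a support of at most $\dmax+1$ integers (Assumption \ref{asu: bounded potential outcomes}(i)) and a Cauchy--Schwarz inequality, gives $A_l^2\le c_1^2(\dmax+2)^2\,\|P_{S_l^{\bw}}-P_G\|_2^2$ after absorbing the natural factor $\dmax+1$ into $(\dmax+2)^2$. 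The noise piece $B_l=|S_l^{\bw}|^{-1}\sum_{i\in S_l^{\bw}}\varepsilon_{li}-n^{-1}\sum_{i=1}^n\varepsilon_{li}$ is handled by the triangle inequality and Assumption \ref{asu: bounded potential outcomes}(ii) applied to $S_l^{\bw}$ and to $[n]$, giving $|B_l|\le c_2/\sqrt{|S_l^{\bw}|}+c_2/\sqrt{n}\le 2c_2/\sqrt{|S_l^{\bw}|}$.

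Finally, two applications of the elementary inequality $(x+y)^2\le 2(x^2+y^2)$ — one to $\hat{\tau}_{\bw}-\tau=U_1-U_0$ and one inside each $U_l^2=(A_l+B_l)^2$ — deliver a pointwise quadratic bound in the four pieces $A_1^2,A_0^2,B_1^2,B_0^2$. Taking expectations over the randomness of $S_1^{\bw}$ and $S_0^{\bw}$ produces exactly the claimed inequality. No deep probabilistic tool is required once the isolation property has collapsed $\hat{\tau}_{\bw}$ to an oracle-style sample mean; the only real obstacle is bookkeeping the constants across the two rounds of $(x+y)^2\le 2(x^2+y^2)$ and tracking where the factor $(\dmax+2)^2$ in the theorem comes from so that the Cauchy--Schwarz step and the final assembly line up with the stated coefficients.
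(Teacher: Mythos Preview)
Your proposal is correct and follows essentially the same route as the paper's proof: split $\hat\tau_{\bw}-\tau$ into treated and control halves, decompose each via Assumption~\ref{asu: potential outcomes decomposition} into a pattern piece (controlled by Cauchy--Schwarz against $\|P_{S_l^{\bw}}-P_G\|_2^2$) and a noise piece (controlled via Assumption~\ref{asu: bounded potential outcomes}(ii)), then assemble with two rounds of $(x+y)^2\le 2(x^2+y^2)$; your opening paragraph justifying $Y_i=Y_i(\bone)$ on $S_1^{\bw}$ and $Y_i=Y_i(\bzero)$ on $S_0^{\bw}$ from the removal rule plus FNI is in fact more explicit than the paper's own write-up. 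One small caution on the bookkeeping you flagged: if you absorb $d_{\max}+1$ into $(d_{\max}+2)^2$ already at the $A_l^2$ stage, the two factors of $2$ from the assembly leave you with $4c_1^2(d_{\max}+2)^2$ rather than $2c_1^2(d_{\max}+2)^2$; instead keep the sharper Cauchy--Schwarz bound $A_l^2\le c_1^2(d_{\max}+1)\|P_{S_l^{\bw}}-P_G\|_2^2$ through the assembly and only at the end use $4(d_{\max}+1)\le 2(d_{\max}+2)^2$ to land on the stated constant.
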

\begin{proof}
    See Appendix \ref{app: proof of theorem 4.1}.
\end{proof}

Theorem \ref{theo: mse} formulates an intuitive result: typically, if the ``distribution'' of sub-population is closer to the real ``distribution'' and if sample size of sub-population becomes larger, then we can expect estimators perform better. Here, $\MSE(\hat{\tau}_{\bw})$ is controlled by two parts: the first one is related to the ``pattern'' terms $\{f_l(d),l\in\{0,1\},d\in \N\}$ while the second one is related to the ``noise'' terms $\{\varepsilon_{li},l\in{0,1}, i\in [n],n\ n\in \N\}$, and coefficients $2c_1^2(d_\text{max}+2)^2$ and $16c_2^2$ reflect their corresponding importance. Specifically, if potential outcomes $\{Y_i(1),Y_i(0)\}$ strongly depend on the in-degree $d_i$ for all $i\in [n]$, then $c_2^2$ may become negligible compared to $c_1^2(d_\text{max}+2)^2$. On the contrary, if there is no any patterns between potential outcomes and in-degrees, $c_1$ could be rather small such that the second part becomes dominated. In most cases, we have no knowledge about the real situation, i.e., $c_1$ and $c_2$ are both unknown, thus it is reasonable to consider these two parts equally. We will revisit two classic examples to interpret this theorem more clearly.

\begin{example}[SUTVA]
    \label{exam: mse-sutva}
    Assume there are $n$ units. Under SUTVA, all units have in-degree $d_i = 0$, so the estimation of in-degree distribution is perfect, which means $||P_{S_1^{\bw}} - P_G||_2^2 = ||P_{S_0^{\bw}} - P_G||_2^2 = 0$. And $|S_1^{\bw}|=|S_0^{\bw}| = n/2$ (suppose $n$ is even for simplicity). Therefore, by Theorem \ref{theo: mse}, $\MSE(\hat{\tau}_{\bw}) \leq O(n^{-1})$, corresponding to the classic theory (Example 3.1 in \cite{ding2024first}).
\end{example}

\begin{example}[Partial Interference]\label{exam: partial interference}
Assume there are $K$ clusters, each of which has $n_c$ units. We use $ki$ to indicate the $i$th unit in the $k$th cluster. Under Partial Interference Assumption, the interference only exists in the same cluster, and does not exist between different clusters. For simplicity, assume all units in the same cluster are fully connected so their in-degrees $d_{ki}\equiv n_c-1$ for all $k \in [K]$ and $i \in [n_c]$. It's easy to see that the estimation of in-degree distribution is perfect as well like under SUTVA. And $|S_1^{\bw}|=|S_0^{\bw}| = K/2$ (suppose $K$ is even for simplicity). Therefore, by Theorem \ref{theo: mse}, $\MSE(\hat{\tau}_{\bw}) \leq O(K^{-1})$, corresponding to the well-known results (Proposition 5.1 in \cite{BasseFeller2018} and the proof of Theorem 4 in \cite{HudgensHalloran2008}).
\end{example}

\begin{remark}\label{rem: thm4.1}
    When we have more covariates, Assumption \ref{asu: potential outcomes decomposition} and \ref{asu: bounded potential outcomes} can be easily extended. For example, one can write $Y_i(\bone) = f_1(d_i) + f_{1x_1}(x_1) + \dots + f_{1x_p}(x_p) + \varepsilon_{1i}$ and bound them separately if $p$ covariates are collected. Essentially, this leads to a randomization for covariates balance and classical methods can be helpful \citep{liu2022adaptive, ma2020statistical}.
\end{remark}

To motivate a simple MSE surrogate function $\M(\bw)$, we further assume the maximal in-degree of network $G_n$ does not increase with sample size $n$ and omit the unknown coefficients $c_1$ and $c_2$. We define 
\begin{equation}
    \label{eq:surrogate}
    \M(\bw) := \left\{\E||P_{S_1^{\bw}} - P_G||_2^2 + \E||P_{S_0^{\bw}} - P_G||_2^2\right\} + \left\{\E{|S_1^{\bw}|}^{-1}+\E{|S_0^{\bw}|}^{-1}\right\}.
\end{equation}
Then as a natural result of Theorem \ref{theo: mse}, we have $\MSE(\hat{\tau}_{\bw}) = O\left(\M(\bw)\right)$. Given the network $G$, $\M(\bw)$ is a function of WRI weight $\bw$ and we are facing the following optimization problem:
\begin{equation}
    \label{eq: optimization}
    \bm w_\mathbf{opt} = \arg \min_{\bw} \M(\bw).
\end{equation}
The analytical calculation of $\M(\bw)$ is complicated, making classical optimization methods based on the gradients inapplicable. Fortunately, $\M(\bw)$ can be estimated through simulations, allowing us to find the best weight from a given set of candidates. This process, termed weight selection, is outlined in Algorithm \ref{alg: adaptive weight selection}. In the next subsection, we will recommend candidate weights for implementing this algorithm in practice.

\begin{remark}
    Including the maximal in-degree $\dmax$ in $\M(\bw)$ also works in most cases. However, since $\dmax$ is typically large in real networks, the new $\M(\bw)$ may become dominated by the first term. Neglecting the second term increases the risk of selecting a weight that leads to a small $|S^{\bw}|$ during the WRI process, which can significantly increase the estimator's variance. Therefore, we recommend excluding $\dmax$ from $\M(\bw)$ to achieve a more robust weight selection procedure.
\end{remark}

\begin{algorithm}[!htb]
    \caption{Weight Selection}\label{alg: adaptive weight selection}
    \DontPrintSemicolon
    \KwInput{network $G=(V, E)$, candidate set $\{\bm w_l \in \R^n_{\scriptscriptstyle \geq 0}\}_{l=1}^L$, pre-experiment number $N_p$}
    \KwOutput{optimal weight $\bm w_\mathbf{opt}$}
    $\bm d = (d_1,\dots,d_n)^\prime \leftarrow \text{in-degree vector of } G$\\
    $P_G \leftarrow \text{PMF of } \{d_i: i\in [n]\}$ \\
    \For{$l \in \{ 1,\dots, L \}$ }{
        \For{$j \in \{ 1,\dots, N_p \}$}{
            $S \leftarrow \textbf{WRI}(G,\bm w_k)$  \tcp*{isolated set}
            $S_{1} \leftarrow \textbf{CR}(S,\lfloor|S|/2\rfloor)$ \tcp*{CR conditioned on $S$}
            $S_{0} \leftarrow S\setminus S_{1}$ \tcp*{control group}
            $P_{S_1} \leftarrow \text{PMF of } \{d_i: i\in S_1\}$ \\
            $P_{S_0} \leftarrow \text{PMF of } \{d_i: i\in S_0\}$ \\
            $m_{lj} \leftarrow ||P_{S_1} - P_G||_2^2 + ||P_{S_0} - P_G||_2^2 + {|S_1|}^{-1}+{|S_0|}^{-1}$ \tcp*{MSE surrogate}
        }
        $m_l\leftarrow\frac{1}{N_p}\sum_{j=1}^{N_p}m_{lj}$ \\
    }
    $l_\text{opt}\leftarrow\arg\min_l m_l$\\
    $\bm w_\mathbf{opt}\leftarrow \bm w_{\scriptscriptstyle l_\text{opt}}$\\
    \Return{$\bm{w}_\mathbf{opt}$}
\end{algorithm}

\subsection{Recommendation for candidate weights}
In Algorithm \ref{alg: adaptive weight selection}, researchers need to specify a set of candidate weights $\{\bm w_l \in \R^n_{\scriptscriptstyle \geq 0}\}_{l=1}^L$ in advance. To effectively minimize the MSE surrogate, exploring a broader range of candidate weights would be ideal. However, the space $\R^n_{\scriptscriptstyle \geq 0}$ is too vast to fully explore, and computational burdens further limit this exploration. Therefore, it is practical to first identify a set of promising candidate weights. To illustrate this approach, we analyze a toy example. For every $l\in \R$, we define the in-degree-based weight ``$\text{degree}^l$'' as ${w}_i = d_i^l$ for $i \in [n]$. The weight $\text{degree}^0$ is commonly referred to as ``uniform'' weight. Additionally, we refer to the probability of a unit being selected into the isolated set as its inclusion probability.

\begin{example}[Path Graph $P_5$]
    \label{exam: toyexample}
    Consider an undirected network of 5 units where $A_{ij}=\mathrm{I}(0<|i-j|\leq 1)$ for all $i,j\in[5]$, as shown in Figure \ref{fig: toyexample}. This network is known as a ``path'', denoted by $P_5$ \citep{west2001introduction}. For simplicity, we ignore the randomness of complete randomization, reducing the MSE surrogate to $\M^*(\bw) = \E||P_{S^{\bw}} - P_{P_5}||_2^2 + \E|S^{\bw}|^{-1}$. We compare $\M^*(\bw)$ across six candidate weights: $\text{degree}^l$, with $l\in\{-1,0,1,2,3,4\}$. Through straightforward calculations, their respective $\M^*(\bw)$ values are 0.901, 0.820, 0.778, 0.769, 0.772, 0.778.
\end{example}

\begin{figure}[!htb]
    \centering
    \includegraphics[width=0.95\textwidth]{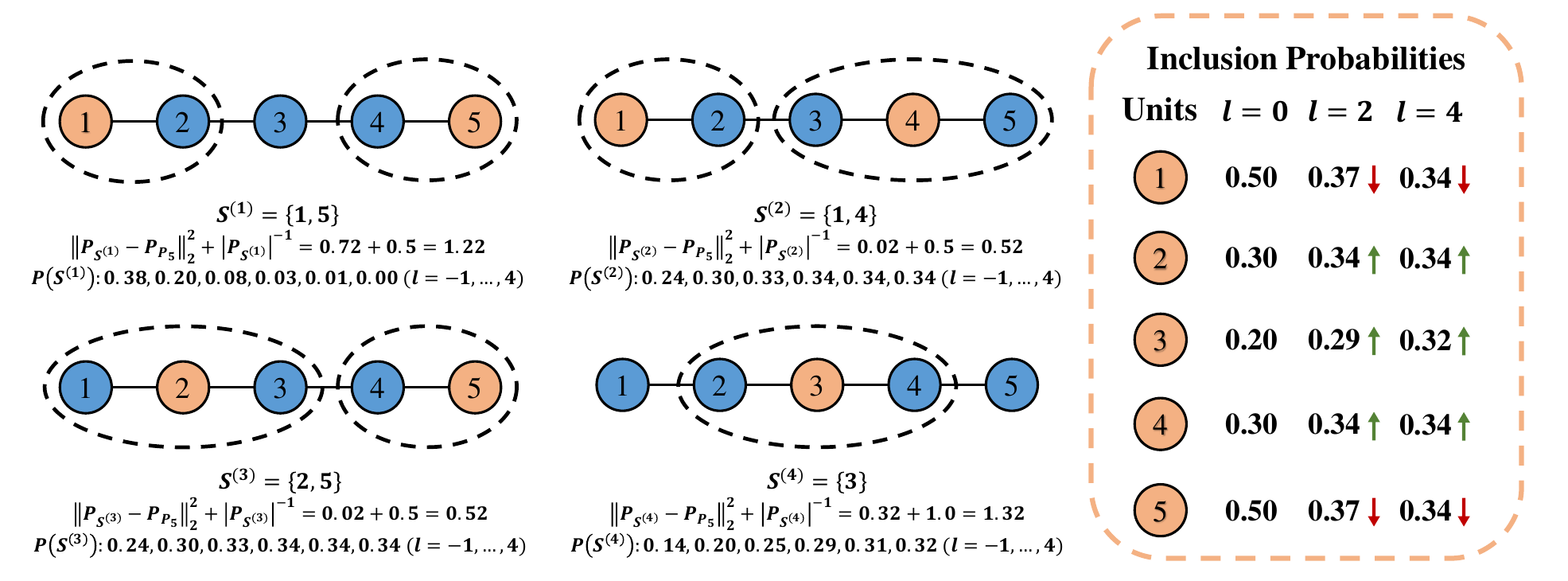}
    \caption{A toy example illustrating the role of weights $\text{degree}^l$, with $l\in\{-1,0,1,2,3,4\}$. The network is a chain with 5 units. The sets $\{S^{(m)},m \in [4]\}$ represent 4 possible isolated sets generated by the WRI design. $P(S^{(m)})$ indicate the probabilities of units being sampled into $S^{(m)}$ for $l=-1,\dots,4$. $||P_{S^{(m)}} - P_{P_5}||_2^2 + |S^{(m)}|^{-1}$ shows the corresponding $\M^*(\bw)$ value given the set $S^{(m)}$. The orange units represent isolated units and the black dashed ovals circles clusters. On the right, units' inclusion probabilities of $l=0,2,4$ are summarized.}
    \label{fig: toyexample}
\end{figure}

As Example \ref{exam: toyexample} demonstrates, $\M^*(\bw)$ for weights $\text{degree}^l$ decreases initially and then increases as $l$ grows, which we believe reflects a common pattern in general networks. The intuition behind this is that as $l$ increases, the inclusion probabilities of units with high in-degrees, which are typically lower than those of units with low in-degrees, are improved. This adjustment leads to more even inclusion probabilities across units, as shown in the right part of Figure \ref{fig: toyexample}. Consequently, the first term of $\M^*(\bw)$ decreases. However, the inclusion of units with high in-degrees is often associated with a smaller valid sample size $|S^{\bw}|$, which ultimately causes an increase in the second term of $\M^*(\bw)$. As a result, we expect $\M^*(\bw)$ for weights $\text{degree}^l$ to reach a local minimum when $l$ is at a moderate value. 

Therefore, we recommend including $\{\text{degree}^l, l\in\{-1,0,1,2,3,4\}\}$ in the candidate set. Additionally, inspired by \cite{ugander2023randomized}, we also include $\{\text{spectral}^l, l\in\{-1,0,1,2,3,4\}\}$ in the candidate set, where the spectral weight is defined as the eigenvector associated with the spectral radius of the 2-order adjacency matrix of $G$. The concept of the 2-order adjacency matrix is detailed in Section \ref{app: notation}. The definition of $\text{spectral}^l$ is analogous to that of $\text{degree}^l$. Since these weights are derived solely from the network structure, we describe our weight selection procedure as network-adaptive. In Figure \ref{fig: adaptive}, we demonstrate that adaptive selection based on the MSE surrogate from our recommended candidate set effectively reduces $\MSE(\hat{\tau}_{\bw})$ compared to the original ``uniform'', ``degree'', and ``spectral'' weights. In practice, if computational resources allow, one could include additional candidate weights to further enhance performance.

\begin{figure}[!htb]
    \centering
    \includegraphics[width=0.95\textwidth]{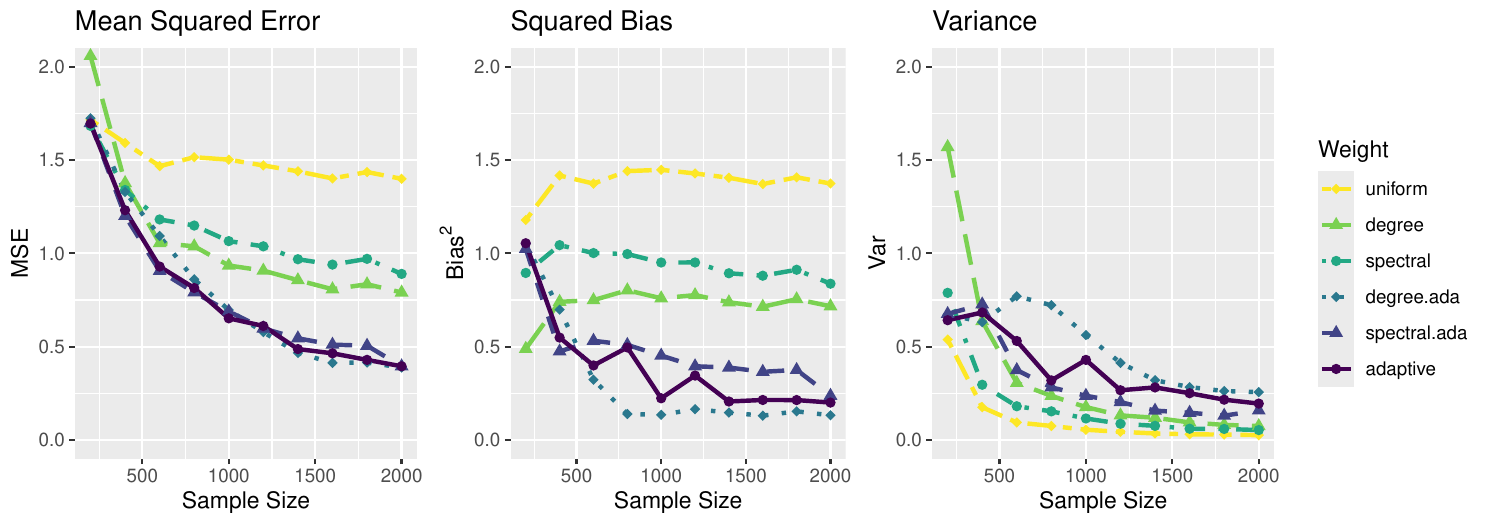}
    \caption{The MSE, Squared Bias and Variance of $\hat{\tau}_{\bw}$ with with different weighting strategies versus the Sample Size. The ``uniform'', ``degree'' and ``spectral'' represent corresponding weights. The ``degree.ada'', ``spectral.ada'' and ``adaptive'' represent the adaptive selection based on the MSE surrogate from the candidate set $\{\text{degree}^l\}_{l=-1}^4$, $\{\text{spectral}^l\}_{l=-1}^4$, and the union set of them. The setting of network model and potential outcome model is same as that in Section \ref{subsec: conjecture}. 1000 replications are conducted.}
    \label{fig: adaptive}
\end{figure}

\section{Numerical studies}
\label{sec: simulation}

\subsection{Setup}
\label{subsec: setup of simulation}
\paragraph*{Networks}
To compare with exsiting methods, all networks considered here are undirected. We examine five popular network models: the Barab\'asi-Albert model (BA), the random geometric model (RG), the small-world model (SW), the Erd\H{o}s-R\'enyi model (ER), and the stochastic block model (SBM). Under the finite population setting, we pre-generate 5 specific networks from the corresponding models. In our settings, the BA model simulates a network with both low-degree and extremely high-degree units, the RG model stimulates a sparse newtwork and the SW model stimulates a dense network \citep{albert2002statistical}. The ER model and SBM model are two popular choices in recent literature \citep{cortez2024combining} and their basic properties are listed in Table \ref{tab: network}.

\begin{table}[!htb]
\centering
\caption{Basic properties of 5 specific networks. $m$ denotes the number of edges, $\dmin$ and $\dmax$ denote the minimum and maximum degrees, $\overline{d}$ denotes the average degree, $\dmax^{(2)}$ denotes the maximum degree of the 2-order adjacency matrix, $\overline{d^{(2)}}$ denotes the average degree of the 2-order adjacency matrix, $\overline{\text{dist}}$ denotes the average shortest path length, and ``diam'' denotes the diameter.}
\label{tab: network}
\begin{tabular}{lllllllllll}
\toprule
network & & $n$ & $m$ & $\dmin$ & $\dmax$ & $\overline{d}$ & $\dmax^{(2)}$ & $\overline{d^{(2)}}$ & $\overline{\text{dist}}$ & diam \\
\midrule
BA & & 1200  & 3594 & 3 & 78      & 5.990     & 627           & 73.07           & 3.621         & 6        \\
RG & & 1200  & 4048 & 1 & 15      & 6.747     & 37            & 17.08           & 18.78         & 45       \\
SW & & 1200  & 6000 & 2 & 20      & 10.00     & 204           & 101.8           & 3.358         & 6        \\
ER & & 1200  & 4141 & 1 & 14      & 6.902     & 115           & 52.95           & 3.895         & 8        \\
SBM & & 1200  & 3160 & 1 & 14      & 5.267     & 77            & 29.72           & 7.844         & 20 \\
\bottomrule    
\end{tabular}
\end{table}

\paragraph*{Potential outcome models}
Three models are considered. The first one is based on \cite{ugander2023randomized} with modifications to account for the heterogeneity in the direct effect parameter $\delta_i$ and spillover effect parameter $\gamma_i$:
\begin{equation}
    \label{eq: ugander}
    \begin{aligned}
    &Y_{i}(\bzero)=(a+b h_i+\sigma \epsilon_i) \frac{d_i}{\bar{d}},\\
    &Y_{i}(\bZ)= Y_i(\bzero)\left(1+\delta_i Z_i+\gamma_i\frac{\sum_{j=1}^n A_{ij}Z_j}{d_i}\right).\end{aligned}
\end{equation}
In this model, $a$ is a baseline effect. $(h_i)_{i=1}^n$, defined as the eigenvector associated with the second smallest eigenvalue of the normalized network Laplacian matrix $D^{-1}L$, captures possible homophily in the network. $\varepsilon_i\iidsim N(0,1)$ is a random perturbation of the baseline effect. we set $(a,b,\sigma)=(1,0.5,0.1)$. $\delta_i\iidsim N(0.5,0.01)$ and $\gamma_i\iidsim N(1,0.01)$ control the direct and spillover effects, respectively. Under the finite population setting, we generate the $\varepsilon_i$'s, $\delta_i$'s and $\gamma_i$'s in advance and keep them consistent across all experiments. This model satisfies the assumption of full neighborhood interference, and the potential outcomes $Y_i(\bone)$'s and $Y_i(\bzero)$'s are correlated with the degree $d_i$, satisfying the Assumption \ref{asu: potential outcomes decomposition}.

The second model is a complex linear model based on \cite{Leung2022}. Unlike their setting, we truncate the infinite summation at 10 for simplicity:
\begin{equation}
    \bY(\bZ)=\frac\alpha{1-\beta}\bone+\gamma\bZ+\gamma\sum_{j=1}^{10}\beta^j\tilde{\bA}^{j}\bZ+\sum_{j=0}^{10}{\beta}^j\tilde{\bA}^j\bm{\varepsilon},
\end{equation}
where $\bY(\bZ)=(Y_i(\bZ))_{i=1}^n$, $\bm{\varepsilon}=(\varepsilon_i)_{i=1}^n$ , and $\tilde{\bA}$ is the row-normalized version of $\bA$ (each row divided by its sum). The third term indicates that the impact of treatments assigned to the $j$-neighborhood is exponentially down-weighted by $\beta^j$. We set $(\alpha,\beta,\gamma)=(-1,0.8,1)$ and generate $\varepsilon_i \iidsim N(0,1)$ in advance. This model violates the full neighborhood interference assumption, so it can be used as a robustness check for the methods.

The third model is a complex contagion model based on \cite{Leung2022}. The dynamic discrete-time process is initialized at period $t=0$ with a binary response vector ${\bY}^0\in\{0,1\}^n$. For a given $\bZ$, the model updates as follows:
\begin{equation}
    Y_i^{t+1}(\bZ)=\mathbf{1}\left\{\alpha+\beta\frac{\sum_j A_{ij}Y_j^{t}}{\sum_jA_{ij}}+\delta\frac{\sum_jA_{ij}Z_j}{\sum_jA_{ij}}+Z_i\gamma+\varepsilon_i>0\right\}.
\end{equation}
The process continues until the first period $T$ where ${\bY}^{T+1}(\bZ)={\bY}^{T}(\bZ)$, and we define the potential outcomes as $\bY(\bZ) = {\bY}^{T}(\bZ)$.
We set $(\alpha,\beta,\delta,\gamma)=(-1,1.5,1,1)$ and generate ${Y}^{0}_i\overset{iid}{\sim}\Ber(0.5)$, $\varepsilon_i \iidsim N(0,1)$ in advance. This model violates the full neighborhood interference assumption and potential outcomes take values only from $\{0, 1\}$, which are significantly different from those in the previous settings. The potential outcomes of three models under the BA network are shown in \ref{fig: outcomes}.

\begin{figure}[!htb]
    \centering
    \includegraphics[width=0.95\textwidth]{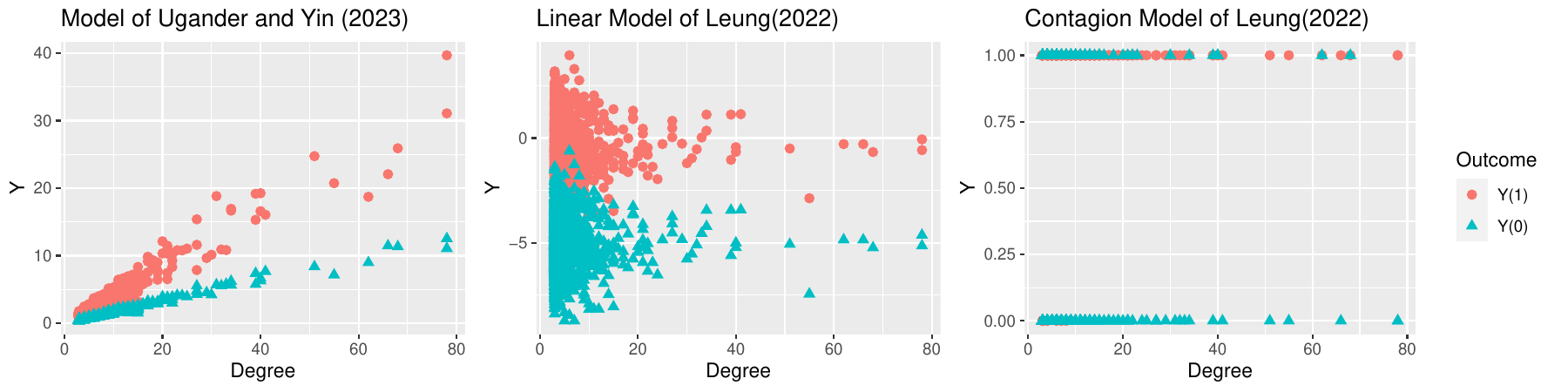}
    \caption{The potential outcomes of 3 models under BA network.}
    \label{fig: outcomes}
\end{figure}

Throughout the entire simulation, we focus exclusively on estimating the total treatment effect (TTE), which is defined by $\TTE = \frac{1}{n}\sum_{i=1}^n (Y_i(1)-Y_i(0))$. We conduct 1000 replications for each combination of network and potential outcome model.

\subsection{Results}
We compare our methods with nine existing ones (DESIGN+estimator) from the literature. The first six are based on IPW-type estimators: naive Bernoulli design with Horvitz-Thompson estimator (BER+ht) \citep{Leung2022}, graph clustering randomization design with HT estimator (GCR+ht) \citep{UganderKarrerBackstrom2013}, randomized graph clustering randomization design with HT estimator (RGCR+ht) \citep{ugander2023randomized}, naive Bernoulli design with H\'ajek estimator (BER+hajek) \citep{gao2023causal}, GCR design with H\'ajek estimator (GCR+hajek) \citep{UganderKarrerBackstrom2013}, and RGCR design with H\'ajek estimator (RGCR+hajek) \citep{ugander2023randomized}. The last three are based on difference-in-means-type estimators: naive Bernoulli design with naive difference-in-means (BER+dim) estimator, causal clustering design with naive difference-in-means estimator (CC+dim) \citep{viviano2023causal}, independent-set design with ordinary least squares estimator (IS+ols) \citep{cai2023independent}.

For our methods, we consider 4 candidates: random isolation design with restricted difference-in-means estimator (RI+rdim), RI design with restricted matched estimator (RI+rmat), adaptive weighted random isolation design with restricted difference-in-means estimator (AWRI+rdim) and AWRI design with restricted matched estimator (AWRI+rmat).

\begin{table}[!htb]
    \caption{The MSE of 13 methods under 5 networks. Potential outcomes are generated based on Equation \ref{eq: ugander} \citep{ugander2023randomized}.}
    \label{tab: ugander}
    \centering
    \resizebox{\textwidth}{!}{
    \begin{tabular}{lllllllllllllllllllll}
            \toprule
    DESIGN+estimator &  & \multicolumn{3}{c}{BA} &           & \multicolumn{3}{c}{RG} &  & \multicolumn{3}{c}{SW} &  & \multicolumn{3}{c}{ER} &           & \multicolumn{3}{c}{SBM} \\
    &  & MSE  & Bias$^2$  & Var &           & MSE  & Bias$^2$  & Var &  & MSE  & Bias$^2$  & Var &  & MSE  & Bias$^2$  & Var &           & MSE  & Bias$^2$  & Var  \\ 
    \cline{1-1} \cline{3-5} \cline{7-9} \cline{11-13} \cline{15-17} \cline{19-21} 
    BER+ht & & 7.474 & 0.322 & 7.152 & & 63.31 & 0.124 & 63.19 & & 98.64 & 0.000 & 98.64 & & 10.48 & 0.052 & 10.43 & & 4.154 & 0.011 & 4.143 \\  
    GCR+ht & & 4.429 & 0.200 & 4.229 & & 0.616 & 0.000 & 0.616 & & 36.44 & 0.038 & 36.40 & & 9.938 & 0.009 & 9.928 & & 3.367 & 0.001 & 3.366 \\  
    RGCR+ht & & 2.407 & 0.003 & 2.403 & & 0.154 & 0.000 & 0.154 & & 0.245 & 0.006 & 0.239 & & 0.259 & 0.011 & 0.248 & & 0.275 & 0.006 & 0.269 \\  
    BER+hajek & & 0.412 & 0.215 & 0.196 & & 0.454 & 0.118 & 0.336 & & 0.598 & 0.368 & 0.231 & & 0.361 & 0.061 & 0.300 & & 0.259 & 0.019 & 0.240 \\  
    GCR+hajek & & 0.330 & 0.153 & 0.177 & & 0.029 & 0.000 & 0.029 & & 0.239 & 0.054 & 0.186 & & 0.236 & 0.022 & 0.213 & & 0.116 & 0.004 & 0.112 \\  
    RGCR+hajek & & 0.532 & 0.010 & 0.522 & & 0.022 & 0.000 & 0.022 & & 0.032 & 0.005 & 0.028 & & \textbf{0.027} & 0.005 & 0.022 & & \textbf{0.021} & 0.002 & 0.019 \\  
    \\  
    BER+dim & & 1.057 & 1.044 & 0.013 & & 0.977 & 0.975 & 0.002 & & 1.011 & 1.010 & 0.001 & & 0.962 & 0.960 & 0.002 & & 1.000 & 0.998 & 0.002 \\  
    CC+dim & & 0.982 & 0.512 & 0.470 & & 0.261 & 0.022 & 0.239 & & 0.697 & 0.585 & 0.113 & & 0.594 & 0.525 & 0.068 & & 0.561 & 0.211 & 0.350 \\
    IS+ols & & 0.339 & 0.328 & 0.011 & & 0.231 & 0.190 & 0.042 & & 0.107 & 0.072 & 0.035 & & 0.193 & 0.165 & 0.028 & & 0.213 & 0.192 & 0.021 \\  
    RI+rdim & & 0.299 & 0.282 & 0.016 & & 0.099 & 0.082 & 0.017 & & 0.113 & 0.090 & 0.023 & & 0.184 & 0.166 & 0.019 & & 0.197 & 0.181 & 0.016 \\  
    RI+rmat & & 0.287 & 0.280 & 0.008 & & 0.086 & 0.083 & 0.003 & & 0.091 & 0.083 & 0.007 & & 0.169 & 0.165 & 0.004 & & 0.190 & 0.188 & 0.003 \\  
    AWRI+rdim & & 0.191 & 0.024 & 0.167 & & 0.029 & 0.008 & 0.021 & & 0.046 & 0.010 & 0.036 & & 0.058 & 0.026 & 0.031 & & 0.073 & 0.050 & 0.023 \\  
    AWRI+rmat & & \textbf{0.106} & 0.020 & 0.086 & & \textbf{0.012} & 0.008 & 0.004 & & \textbf{0.018} & 0.008 & 0.011 & & 0.031 & 0.025 & 0.006 & & 0.052 & 0.047 & 0.004 \\     
    \bottomrule
    \end{tabular}
    }
\end{table}

Table \ref{tab: ugander} presents the MSE of 13 methods across 5 network types under the first potential outcome model. Overall, methods based on IPW-type estimators exhibit very small bias but large variance, while those based on DIM-type estimators demonstrate the opposite trend. Our method ``AWRI+rmat'' outperforms others in the graph BA, RG and SW, whereas ``RGCR+hajek'' performs best in the network ER and SBM. 

Among all methods based on IPW-type estimators, the HT estimators have the worst performance due to their unacceptably high variance, despite being unbiased theoretically. H\'ajek estiamtors, which can be viewd as a regularized version of HT estimators, significantly reduce the variance at the cost of introducing a small bias. Additionally, by refining the design from BER to GCR to RGCR, the variances of both HT and H\'ajek estimators are further reduced.

Among all methods based on DIM-type estimators, the naive Bernoulli design performs worst due to its large bias. The method ``BER+dim'' is unacceptable bacause it is inconsistent with the target effect. The CC design addresses this issue to some extent, but it requires many prior information to tune parameters, making it difficult to implement in practice. Additionally, this design will not be included in subsequent comparisons due to its high computational burden. The IS design performs well in this scenario, but it breaks down in the complex linear model as shown in Tabel \ref{tab: leung-linear}. This is because the method ``IS+ols'' heavily relies on the correct specification of the outcome model, assuming the direct and spillover effects are additive, which is usually violated in practice. The RI designs show reasonable performance but still suffer from non-negligible biases. The AWRI designs effectively address this issue, with a small increase in variance. finally, substituting the restricted DIM estimator to the restricted matched estimator can further reduce the MSE in this case.

Table \ref{tab: leung-linear} and Table \ref{tab: leung-contagion} tell a similar story. Under the models of \cite{Leung2022}, potential outcomes do not strongly correlated with the network degree, as illustrated in Figure \ref{fig: outcomes}. Thus it is not expected that the adaptive weight selection can enhance the performance of WRI. Even so, methods based on AWRI design are still comparable to the best ones in the same setting. Given its simplicity and robustness, it is preferable in practice.

\begin{table}[!htb]
    \caption{The MSE of 12 methods under 5 networks. Potential outcomes are generated based on complex linear model \citep{Leung2022}.}
    \label{tab: leung-linear}
    \centering
    \resizebox{\textwidth}{!}{
    \begin{tabular}{lllllllllllllllllllll}
            \toprule
    DESIGN+estimator &  & \multicolumn{3}{c}{BA} &           & \multicolumn{3}{c}{RG} &  & \multicolumn{3}{c}{SW} &  & \multicolumn{3}{c}{ER} &           & \multicolumn{3}{c}{SBM} \\
    &  & MSE  & Bias$^2$  & Var &           & MSE  & Bias$^2$  & Var &  & MSE  & Bias$^2$  & Var &  & MSE  & Bias$^2$  & Var &           & MSE  & Bias$^2$  & Var  \\ 
    \cline{1-1} \cline{3-5} \cline{7-9} \cline{11-13} \cline{15-17} \cline{19-21} 
BER+ht & & 15.53 & 0.083 & 15.45 & & 36.48 & 0.002 & 36.48 & & 109.2 & 0.123 & 109.0 & & 33.11 & 0.000 & 33.11 & & 7.766 & 0.002 & 7.765 \\  
GCR+ht & & 11.51 & 0.052 & 11.46 & & 1.211 & 0.001 & 1.209 & & 140.3 & 0.057 & 140.2 & & 21.17 & 0.000 & 21.17 & & 7.270 & 0.058 & 7.212 \\  
RGCR+ht & & 2.163 & 0.000 & 2.163 & & 0.294 & 0.004 & 0.290 & & 0.786 & 0.000 & 0.786 & & 0.584 & 0.024 & 0.560 & & 0.714 & 0.018 & 0.696 \\  
BER+hajek & & 0.215 & 0.000 & 0.215 & & 0.940 & 0.000 & 0.939 & & 1.771 & 0.003 & 1.770 & & 0.614 & 0.001 & 0.614 & & 0.400 & 0.000 & 0.400 \\  
GCR+hajek & & 0.109 & 0.000 & 0.109 & & 0.072 & 0.000 & 0.072 & & 0.499 & 0.000 & 0.499 & & 0.234 & 0.001 & 0.233 & & 0.153 & 0.000 & 0.153 \\  
RGCR+hajek & & 0.060 & 0.000 & 0.060 & & \textbf{0.055} & 0.000 & 0.055 & & 0.109 & 0.000 & 0.109 & & 0.056 & 0.000 & 0.055 & & 0.076 & 0.000 & 0.076 \\
\\    
BER+dim & & 11.93 & 11.92 & 0.005 & & 11.07 & 11.06 & 0.008 & & 12.25 & 12.25 & 0.004 & & 12.00 & 11.99 & 0.005 & & 11.61 & 11.61 & 0.006 \\
IS+ols & & 6.124 & 6.061 & 0.063 & & 2.615 & 2.452 & 0.163 & & 6.885 & 6.728 & 0.157 & & 6.310 & 6.221 & 0.089 & & 5.514 & 5.427 & 0.087 \\  
RI+rdim & & 0.056 & 0.000 & 0.056 & & 0.061 & 0.000 & 0.061 & & \textbf{0.090} & 0.000 & 0.090 & & 0.057 & 0.000 & 0.057 & & 0.048 & 0.000 & 0.048 \\  
RI+rmat & & \textbf{0.053} & 0.000 & 0.053 & & 0.070 & 0.000 & 0.070 & & 0.094 & 0.000 & 0.094 & & \textbf{0.055} & 0.000 & 0.055 & & \textbf{0.045} & 0.000 & 0.045 \\  
AWRI+rdim & & 0.064 & 0.000 & 0.064 & & 0.068 & 0.000 & 0.068 & & 0.105 & 0.000 & 0.105 & & 0.059 & 0.000 & 0.059 & & 0.054 & 0.000 & 0.053 \\  
AWRI+rmat & & 0.069 & 0.000 & 0.069 & & 0.069 & 0.000 & 0.069 & & 0.105 & 0.000 & 0.105 & & 0.064 & 0.000 & 0.064 & & 0.054 & 0.000 & 0.054 \\      
    \bottomrule
    \end{tabular}
    }
\end{table}

\begin{table}[!htb]
    \caption{The MSE of 12 methods under 5 networks. Potential outcomes are generated based on complex contagion model \citep{Leung2022}.}
    \label{tab: leung-contagion}
    \centering
    \resizebox{\textwidth}{!}{
    \begin{tabular}{lllllllllllllllllllll}
            \toprule
    DESIGN+estimator &  & \multicolumn{3}{c}{BA} &           & \multicolumn{3}{c}{RG} &  & \multicolumn{3}{c}{SW} &  & \multicolumn{3}{c}{ER} &           & \multicolumn{3}{c}{SBM} \\
    &  & MSE  & Bias$^2$  & Var &           & MSE  & Bias$^2$  & Var &  & MSE  & Bias$^2$  & Var &  & MSE  & Bias$^2$  & Var &           & MSE  & Bias$^2$  & Var  \\ 
    \cline{1-1} \cline{3-5} \cline{7-9} \cline{11-13} \cline{15-17} \cline{19-21} 
BER+ht & & 5.136 & 0.006 & 5.130 & & 2.079 & 0.001 & 2.079 & & 10.29 & 0.016 & 10.27 & & 1.701 & 0.001 & 1.700 & & 0.462 & 0.000 & 0.462 \\  
GCR+ht & & 13.41 & 0.009 & 13.40 & & 0.076 & 0.000 & 0.076 & & 4.581 & 0.000 & 4.580 & & 1.097 & 0.000 & 1.097 & & 0.307 & 0.000 & 0.307 \\  
RGCR+ht & & 0.125 & 0.000 & 0.125 & & 0.014 & 0.000 & 0.014 & & 0.034 & 0.000 & 0.034 & & 0.032 & 0.000 & 0.032 & & 0.033 & 0.000 & 0.033 \\  
BER+hajek & & 0.020 & 0.000 & 0.020 & & 0.050 & 0.000 & 0.050 & & 0.144 & 0.001 & 0.144 & & 0.049 & 0.000 & 0.049 & & 0.026 & 0.000 & 0.026 \\  
GCR+hajek & & 0.008 & 0.000 & 0.008 & & 0.002 & 0.000 & 0.002 & & 0.041 & 0.000 & 0.041 & & 0.020 & 0.000 & 0.020 & & 0.007 & 0.000 & 0.007 \\  
RGCR+hajek & & \textbf{0.003} & 0.000 & 0.003 & & \textbf{0.001} & 0.000 & 0.001 & & 0.008 & 0.000 & 0.008 & & \textbf{0.004} & 0.000 & 0.004 & & \textbf{0.002} & 0.000 & 0.002 \\
\\    
BER+dim & & 0.209 & 0.208 & 0.000 & & 0.177 & 0.176 & 0.000 & & 0.254 & 0.254 & 0.000 & & 0.203 & 0.203 & 0.000 & & 0.207 & 0.207 & 0.000 \\
IS+ols & & 0.058 & 0.053 & 0.005 & & 0.017 & 0.009 & 0.008 & & 0.068 & 0.055 & 0.013 & & 0.039 & 0.032 & 0.007 & & 0.040 & 0.035 & 0.004 \\  
RI+rdim & & 0.004 & 0.000 & 0.004 & & 0.003 & 0.000 & 0.003 & & \textbf{0.007} & 0.000 & 0.007 & & 0.004 & 0.000 & 0.004 & & 0.003 & 0.000 & 0.002 \\  
RI+rmat & & 0.004 & 0.000 & 0.004 & & 0.003 & 0.000 & 0.003 & & 0.007 & 0.000 & 0.007 & & 0.004 & 0.000 & 0.004 & & 0.003 & 0.000 & 0.003 \\  
AWRI+rdim & & 0.005 & 0.000 & 0.005 & & 0.004 & 0.001 & 0.003 & & 0.007 & 0.000 & 0.007 & & 0.005 & 0.000 & 0.005 & & 0.003 & 0.000 & 0.003 \\  
AWRI+rmat & & 0.005 & 0.000 & 0.005 & & 0.004 & 0.001 & 0.003 & & 0.008 & 0.000 & 0.008 & & 0.005 & 0.000 & 0.005 & & 0.003 & 0.000 & 0.003 \\ 
    \bottomrule
    \end{tabular}
    }
\end{table}

\subsection{Conjecture on the consistency}
\label{subsec: conjecture}
In previous sections, we did not provide any theoretical guarantees for the convergence of methods based on the AWRI design. Here, we explore these aspects through simulations. Specifically, we compare Mean squared error, Squared bias and variance of eight methods (IS+ols, BER+hajek, GCR+hajek, RGCR+hajek, RI+rdim, RI+rmat, AWRI+rdim and AWRI+rmat) as sample size increases.

The detailed settings are as follows. Networks are generated from the BA network with sample size $n=200,400,\dots,2000$. Potential outcomes are generated based on the model of \cite{ugander2023randomized} (Equation \ref{eq: ugander}). All parameters remain consistent with the settings in Section \ref{subsec: setup of simulation}, except that spillover effect parameters $\gamma_i\iidsim N(3,0.01)$. 1000 replications are conducted. 

\begin{figure}[!htb]
    \centering
    \includegraphics[width=0.95\textwidth]{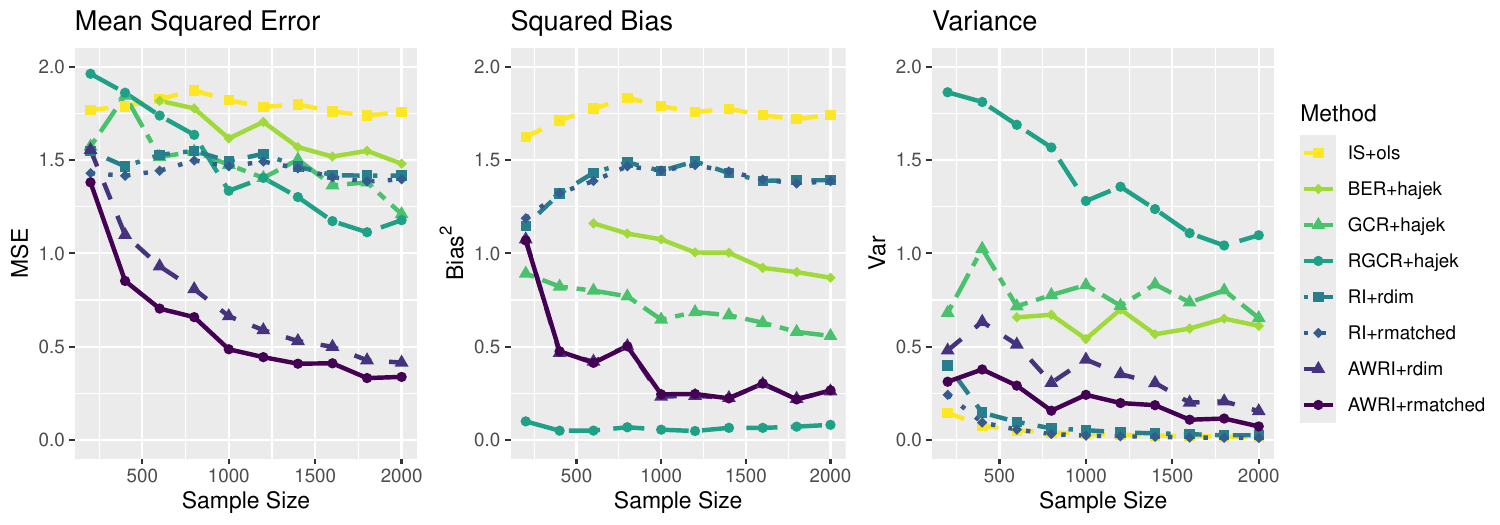}
    \caption{The MSE, Squared Bias, and Variance versus the Sample Size for 8 Methods.}
    \label{fig: ugander-ba-8methods}
\end{figure}

The results are shown in Figure \ref{fig: ugander-ba-8methods}. The ``IS+ols'', ``RI+rdim'' and ``RI+rmat'' methods do not exhibit clear convergence due to their uncontrollable bias. The three methods based on H\'ajek estimators converge at a slow rate. In contrast, the methods based on the AWRI design converge more rapidly. Therefore, we can reasonably expect that the proposed methods possess consistency under some conditions.

\section{Conclusion} 
\label{sec: conclusion}

In this paper, we introduce a novel randomized design called adaptive weighted random isolation (AWRI), which can sample an isolated set where causal effects can be estimated as if no interference exists between the isolated units. We pair this design with a restricted difference-in-means estimator, which is typically not favored in classical survey sampling literature. However, in the presence of complex interference, facing the task of estimating total treatment effects (TTE), this method outperforms the best existing approaches. Due to its simplicity and interpretability, we recommend this method for practical network experiments.

There are several directions for future research. First, this method can be easily extended to estimate spillover effects, direct effects, and other effects of interest in the presence of interference. And the concept of isolated units naturally aligns with that of focal units in \cite{AtheyEcklesImbens2018a}, enabling the extension of corresponding randomization inference \citep{basse2019randomization, puelz2022graph, basse2024randomization}. Second, more rigorous theoretical exploration is needed. Under network interference and finite population settings, deriving meaningful theoretical results is extremely challenging, and only limited work provided insights into this area \citep{kojevnikov2021limit, Leung2022, gao2023causal, viviano2023causal}. Third, in practice, social networks may be weighted or observed with noise \citep{egami2021spillover, hardy2019estimating, young2020bayesian}. More critically, they may even be unobserved entirely, making multi-stage designs a valuable consideration in such cases \citep{li2021causal, yu2022estimating, cortez2022staggered, cortez2024combining}. Finally, the intuition behind adaptive weight selection also applies to other methods such as RGCR, which assigns every unit a weight to adjust their exposure probabilities. The challenge lies in the complexity of new mean squared error (MSE) surrogates induced by Horvitz-Thompson (HT) and H\'ajek estimators, and optimizing weights can be computationally expensive due to the reliance on Monte Carlo methods for estimating exposure probabilities \citep{ugander2023randomized}, necessitating effective approximation techniques.

\newpage 
\bibliographystyle{ecta}
\bibliography{bib}

\appendix

\renewcommand{\theproposition}{S\arabic{proposition}}
\renewcommand{\theexample}{S\arabic{example}}
\renewcommand{\thetable}{S\arabic{table}}
\renewcommand{\theequation}{S\arabic{equation}}
\renewcommand{\thelemma}{S\arabic{lemma}}
\renewcommand{\thesection}{S\arabic{section}}
\renewcommand{\thetheorem}{S\arabic{theorem}}
\renewcommand{\thecorollary}{S\arabic{corollary}}
\renewcommand{\theremark}{S\arabic{remark}}

\setcounter{equation}{0}
\renewcommand {\theequation} {S\arabic{equation}}
\setcounter{lemma}{0}
\renewcommand {\thelemma} {S\arabic{lemma}}
\setcounter{definition}{0}
\renewcommand {\thedefinition} {S\arabic{definition}}
\setcounter{example}{0}
\renewcommand {\theexample} {S\arabic{example}}
\setcounter{proposition}{0}
\renewcommand {\theproposition} {S\arabic{proposition}}
\setcounter{corollary}{0}
\renewcommand {\thecorollary} {S\arabic{corollary}}

\newpage

\begin{center}
\Large\bfseries{Supplementary Material}
\end{center}

\section{Notation}
\label{app: notation}
Given a network $G=(V,E)$, let $G_2 = (V, E_2)$ denote the ``squared'' network, \ie, with the same unit set $V$, and an edge $(i, j) \in E_2$ if and only if there exists a unit $l$ such that $(i,l)\in E$ and $(l,j)\in E$. And we call its corresponding adjacent matrix as 2-order adjacent matrix of $G$. In an undirected network, the unnormalized network Laplacian matrix is defined by $\bL = \bD - \bA$, where $\bD$ is the diagonal degree matrix.

\section{Proofs} 
\label{app: proofs}

\subsection{Proof of Theorem \ref{theo: mse}}
\label{app: proof of theorem 4.1}
\begin{proof}[Proof of Theorem \ref{theo: mse}]
    First, we have
\begin{equation*}
    \begin{aligned}
    \MSE(\hat{\tau}_{\bw})  
    &= \E(\hat{\tau}_{\bw}-\tau)^2 \\
    &= \E((\hat{\tau}_{1,\bw}-\hat{\tau}_{0,\bw}) - (\tau_1-\tau_0))^2 \\ 
    &\leq 2\E(\hat{\tau}_{1,\bw}-\tau_1)^2 + 2\E(\hat{\tau}_{0,\bw}-\tau_0)^2 \\
    &:= 2I_1 + 2I_0,  
\end{aligned}
\end{equation*}
where $I_1= \E(\hat{\tau}_{1,\bw}-\tau_1)^2$ and $I_0= \E(\hat{\tau}_{0,\bw}-\tau_0)^2$. 

By Assumption \ref{asu: potential outcomes decomposition}, $I_1$ can be written as
$$
\begin{aligned}
    I_1
    & = \E\left(\frac{1}{|S_1^{\bw}|}\sum_{i\in S_1^{\bw}}Y_i(1) - \frac{1}{n}\sum_{i\in [n]}Y_i(1)\right)^2 \\
    & = \E\left(\frac{1}{|S_1^{\bw}|}\sum_{i\in S_1^{\bw}}(f_1(d_i)+\varepsilon_{1i}) - \frac{1}{n}\sum_{i\in [n]}(f_1(d_i)+\varepsilon_{1i})\right)^2 \\
    & = \E\left( \left(\frac{1}{|S_1^{\bw}|}\sum_{i\in S_1^{\bw}}f_1(d_i) - \frac{1}{n}\sum_{i\in [n]}f_1(d_i)\right) + \left(\frac{1}{|S_1^{\bw}|}\sum_{i\in S_1^{\bw}}\varepsilon_{1i} - \frac{1}{n}\sum_{i\in [n]}\varepsilon_{1i}\right)\right)^2. \\
\end{aligned}
$$
Let $P_{S_1^{\bw}}$, $P_{S_0^{\bw}}$ and $P_G$ indicate probabilistic mass functions (PMF) of $\{d_i, i\in S_1^{\bw}\}$, $\{d_i,i\in S_0^{\bw}\}$ and $\{d_i, i \in [n]\}$, respectively. Rewrite the first term,
$$
\begin{aligned}
    I_1
    & = \E\left( \left(\sum_{d=0}^{d_\text{max}}f_1(d)P_{S_1^{\bw}}(d) - \sum_{d=0}^{d_\text{max}}f_1(d)P_G(d)\right) + \E\left(\frac{1}{|S_1^{\bw}|}\sum_{i\in S_1^{\bw}}\varepsilon_{1i} - \frac{1}{n}\sum_{i\in [n]}\varepsilon_{1i}\right)\right)^2 \\ 
    & \leq 2\E\left(\sum_{d=0}^{d_\text{max}}f_1(d)P_{S_1^{\bw}}(d) - \sum_{d=0}^{d_\text{max}}f_1(d)P_G(d)\right)^2 +  2\E\left(\frac{1}{|S_1^{\bw}|}\sum_{i\in S_1^{\bw}}\varepsilon_{1i} - \frac{1}{n}\sum_{i\in [n]}\varepsilon_{1i}\right)^2\\
    & = 2\E\left(\sum_{d=0}^{d_\text{max}}f_1(d)(P_{S_1^{\bw}}(d)-P_G(d))\right)^2 + 2\E\left(\frac{1}{|S_1^{\bw}|}\sum_{i\in S_1^{\bw}}\varepsilon_{1i} - \frac{1}{n}\sum_{i\in [n]}\varepsilon_{1i}\right)^2,\\
\end{aligned}
$$
where $\dmax$ is the maximal in-degree of network $G$.

By Assumption \ref{asu: bounded potential outcomes} (ii), we have $|\frac{1}{|S_1^{\bw}|}\sum_{i\in S_1^{\bw}}\varepsilon_{1i}|\leq \frac{1}{\sqrt{|S_1^{\bw}|}}c_2$ and $|\frac{1}{n}\sum_{i\in [n]}\varepsilon_{1i}|\leq \frac{1}{\sqrt{n}}c_2$. So 
$$\left|\frac{1}{|S_1^{\bw}|}\sum_{i\in S_1^{\bw}}\varepsilon_{1i} - \frac{1}{n}\sum_{i\in [n]}\varepsilon_{1i}\right|\leq \frac{2}{\sqrt{|S_1^{\bw}|}}c_2,$$
then
$$
I_1\leq 2\E\left(\sum_{d=0}^{d_\text{max}}|f_1(d)||P_{S_1^{\bw}}(d)-P_G(d)|\right)^2 + 2\E \left(\frac{2c_2}{\sqrt{|S_1^{\bw}|}}\right)^2,
$$
by Hölder's inequality,
$$
\begin{aligned}
    I_1
    & \leq 2\E\left(\sum_{d=0}^{d_\text{max}}f_1^2(d)\right)\left(\sum_{d=0}^{d_\text{max}}\left(P_{S_1^{\bw}}(d)-P_G(d)\right)^2\right) + 8c_2^2\E\frac{1}{|S_1^{\bw}|}\\
    & = 2\left(\sum_{d=0}^{d_\text{max}}f_1^2(d)\right) \E\left(\sum_{d=0}^{d_\text{max}}\left(P_{S_1^{\bw}}(d)-P_G(d)\right)^2\right) + 8c_2^2\E\frac{1}{|S_1^{\bw}|}\\
    &= 2||f_1||_2^2 \E||P_{S_1^{\bw}} - P_G||_2^2 + 8c_2^2\E{|S_1^{\bw}|}^{-1}.\\
\end{aligned}
$$
Finally, by Assumption \ref{asu: bounded potential outcomes} (i), we get
$$
\begin{aligned}
    I_1
    & \leq 2 c_1^2 \frac{(d_\text{max}+1)(d_\text{max} + 2)}{2}\E||P_{S_1^{\bw}} - P_G||_2^2 + 8c_2^2\E{|S_1^{\bw}|}^{-1}\\
    & \leq c_1^2(d_\text{max} + 2)^2\E||P_{S_1^{\bw}} - P_G||_2^2 + 8c_2^2\E{|S_1^{\bw}|}^{-1}. 
\end{aligned}
$$
Analogously,
$$
I_0\leq c_1^2(d_\text{max} + 2)^2\E||P_{S_0^{\bw}} - P_G||_2^2 + 8c_2^2\E{|S_0^{\bw}|}^{-1}.
$$
Thus,
$$
    \begin{aligned}
        \MSE(\hat{\tau}) 
        &\leq 2I_1 + 2I_0 \\
        &\leq 4||f_1||_2^2 \E||P_{S_1^{\bw}} - P_G||_2^2 + 4||f_0||_2^2 \E||P_{S_0^{\bw}} - P_G||_2^2 + 16c_2^2\E{|S_1^{\bw}|}^{-1} + 16c_2^2\E{|S_0^{\bw}|}^{-1}\\
        & \leq 2c_1^2\left(d_\text{max}+2\right)^2\left(\E||P_{S_1^{\bw}} - P_G||_2^2 + \E||P_{S_0^{\bw}} - P_G||_2^2\right) + 16c_2^2\left(\E{|S_1^{\bw}|}^{-1}+\E{|S_0^{\bw}|}^{-1}\right).
    \end{aligned}
$$
\end{proof}

\section{Additional results}
\label{app: additional results}

\subsection{Classical resutls of difference-in-means estimators}
\begin{theorem}
    \label{theo: classic resutls of diff}
    Suppose Assumption \ref{asu: fni} holds, then under the RI with CR,
    \begin{itemize}
        \item[(i)] Given $S$, $\hat{\tau}$ is unbiased for $\tau_S$,
        $$\E(\hat{\tau}|S)=\tau_S.$$
        \item[(ii)] Given $S$, $\hat{\tau}$ has variance 
        $$\Var(\hat{\tau}|S)=\frac{V_S^2(\bone)}{|S_1|}+\frac{V_S^2(\bzero)}{|S_0|}-\frac{V_S^2(\tau_S)}{|S|},$$
        where $V_S^2(\bone) = (|S_1|-1)^{-1}\sum_{i\in S_1}(Y_i(\bone)-\bar{Y}_{S_1}(\bone))^2$, $\bar{Y}_{S_1}(\bone)=|S_1|^{-1}\sum_{i\in S_1}Y_i(\bone)$. $V_S^2(\bzero)$ and $\bar{Y}_{S_0}(\bzero)$ are defined analogously, just substituting $\bone$ with $\bzero$. $V_S^2(\tau_S)=(|S|-1)^{-1}\sum_{i\in S}(\tau_i-\tau_S)^2$.
        \item[(iii)] The Mean Squared Error of $\hat{\tau}$ about $\tau$ is \begin{equation}\label{eq: mse decomposition}
            \MSE(\hat{\tau}) = \MSE(\tau_S) + \E_S(\Var(\hat{\tau}|S)),
        \end{equation} 
        where $\MSE(\tau_S) = \E_S(\tau_S-\tau)^2 =  (\E_S(\tau_S)-\tau)^2 + \Var_S(\tau_S)$, $\E_S$ and $\Var_S$ denote expectation and variance taken over all possible $S$.
    \end{itemize}
\end{theorem}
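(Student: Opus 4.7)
The plan is to show that, conditional on $S$, the combination of Algorithm \ref{alg: random isolation} and Algorithm \ref{alg: cr} reduces the experiment to a classical Neyman-style complete randomization on the finite population $S$ with no interference. Parts (i) and (ii) then follow from textbook Neyman identities, and part (iii) from the law of iterated expectations plus a bias--variance split.

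The key technical step is an \emph{isolation lemma}: for any distinct $i, j \in S$ returned by Algorithm \ref{alg: random isolation}, one has $\tinnode_i \cap \tinnode_j = \varnothing$. I would prove this by inspecting the removal rule. When $i$ is added to $S$, the algorithm deletes $\bigcup_{l \in \tinnode_i} \toutnode_l$ from $V_1$. Any $j$ chosen at a later iteration therefore lies outside $\toutnode_l$ for every $l \in \tinnode_i$, which is equivalent to the statement that no $l$ simultaneously belongs to both $\tinnode_i$ and $\tinnode_j$. Symmetrizing over the unordered pair $\{i, j\}$ yields the claim.

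Combined with Assumption \ref{asu: fni}, the isolation lemma shows that the assignment produced in lines \ref{line: cr2}--\ref{line: cr3} of Algorithm \ref{alg: cr} is unambiguous (no two isolated units compete for a shared in-neighbor) and that, conditional on $S$, $Y_i = Y_i(\bone)$ for every $i \in S_1$ and $Y_i = Y_i(\bzero)$ for every $i \in S_0$. Hence $\hat{\tau}$ is literally the Neyman difference-in-means estimator evaluated on the finite population $\{(Y_i(\bone), Y_i(\bzero)) : i \in S\}$, with $S_1 = \CR(S, \lfloor |S|/2 \rfloor)$ a simple random sample of size $|S_1|$ drawn without replacement. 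The classical finite-population results \citep{ImbensRubin2015, ding2024first} then deliver (i) at once and (ii) via the well-known variance identity featuring $V_S^2(\bone)$, $V_S^2(\bzero)$, and $V_S^2(\tau_S)$.

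For (iii), I would write $\hat{\tau} - \tau = (\hat{\tau} - \tau_S) + (\tau_S - \tau)$, square, and apply the tower property. The cross term vanishes because $\tau_S - \tau$ is $\sigma(S)$-measurable and $\E(\hat{\tau} - \tau_S \mid S) = 0$ by (i). The first squared term becomes $\E_S[\Var(\hat{\tau} \mid S)]$ using conditional unbiasedness, and the second is $\MSE(\tau_S)$, which splits into $(\E_S \tau_S - \tau)^2 + \Var_S(\tau_S)$ by the standard bias--variance decomposition. The only real obstacle is the careful bookkeeping for the isolation lemma; once that is established, the rest is a short reduction to classical Neyman theory and a one-line application of iterated expectations.
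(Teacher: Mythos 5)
Your proposal is correct and follows essentially the same route as the paper: reduce to a classical Neyman complete randomization on the finite population $S$ for (i)--(ii) and apply iterated expectations for (iii). The isolation lemma you prove ($\tinnode_i \cap \tinnode_j = \varnothing$ for distinct $i,j\in S$, hence $Y_i=Y_i(\bone)$ on $S_1$ and $Y_i=Y_i(\bzero)$ on $S_0$) is left implicit in the paper's one-line citation of Theorem 4.1 of \cite{ding2024first}, so spelling it out is added rigor rather than a different approach.
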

\begin{proof}
    (i) and (ii) follow from the Theorem 4.1 in \cite{ding2024first}. (iii) follows from the tower property of expectations:
    $$\begin{aligned}
        \MSE(\hat{\tau}) 
        &= (\E\hat{\tau}-\tau)^2 + \Var(\hat{\tau}) \\
        &= (\E_S\E(\hat{\tau}|S)-\tau)^2 + \Var_S(\E(\htau|S)) + \E_S(\Var(\hat{\tau}|S)) \\
        &= \{(\E_S(\tau_S)-\tau)^2 + \Var_S(\tau_S)\} + \E_S(\Var(\hat{\tau}|S))\\
        &= \MSE(\tau_S) + \E_S(\Var(\hat{\tau}|S)).
    \end{aligned}
    $$
\end{proof}

\subsection{Matched-pairs randomization (MPR)}
\label{app: matched}
After getting random isolated set $S$, we can also consider matched-pairs randomization. There are many covariates induced from network that can be used to match units. We based it on the simplest one: degrees. We arrange the units in descending order according to their degrees. If $|S|$ is even, we match adjacent units into a pair so there will be $|S|/2$ pairs; otherwise, we let the last ``pair'' include three units (more precisely, this is called stratified randomization) resulting in $\lfloor|S|/2\rfloor$ pairs. We then randomly treat one unit in each pair and proceed with Line \ref{line: cr2} and Line \ref{line: cr3} in Algorithm \ref{alg: cr}. Finally, $|S_1|=\lfloor|S|/2\rfloor$. 

Let $n_{k}$ represent the size of each pair (all equal to 2 if $|S|$ is even, with the last one equal to 3 if not) and let $\htau_k$ denote the standard inverse probability weighted (IPW) estimator for every pair's TTE, where $k=1,\dots,|S_1|$. The matched estimator is
\begin{equation}
    \label{eq: matched estimator}
    \hat{\tau}_m = \sum_{k=1}^{|S_1|}\frac{n_{k}}{|S|}\htau_{k}.
\end{equation}

\subsection{Properties of simple random sample}
\label{app: simple random sample}
\begin{theorem}
\label{theo: simple random sample}
Suppose Assumption \ref{asu: normal bounded outcomes} holds. $\{\tau_i, i\in S\}$ is a simple random sample of $\{\tau_i, i\in [n]\}$ with size $K$. Then
\begin{itemize}
    \item[(i)] $\tau_S$ is unbiased for $\tau$: $$\E(\tau_S)=\tau.$$
    \item[(ii)] $\tau_S$ has variance $$\Var(\tau_S)\leq O(\frac{1}{K}\frac{n-K}{n}).$$
    \item[(iii)] The second term in Equation \ref{eq: mse decomposition} can be controlled: $$\E_S(\Var(\htau|S))\leq O(\frac{1}{K}).$$
    \item[(iv)] Thus the Mean Squared Error of $\htau$: $$\MSE(\htau) \leq O(\frac{1}{K}).$$
\end{itemize}
\end{theorem}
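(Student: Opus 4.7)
The plan is to verify parts (i)--(iv) sequentially, relying on standard simple random sampling (SRS) identities for parts (i)--(ii) and combining them with Theorem \ref{theo: classic resutls of diff} (the classical properties of the difference-in-means estimator under complete randomization) for parts (iii)--(iv).

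For part (i), since $S$ is a SRS of size $K$, each unit $i\in[n]$ satisfies $\Pr(i\in S)=K/n$. By linearity of expectation, $\E(\tau_S)=\E\bigl(K^{-1}\sum_{i=1}^{n}\tau_i\,I(i\in S)\bigr)=K^{-1}\sum_{i=1}^{n}\tau_i\cdot(K/n)=\tau$. For part (ii), I would invoke the classical finite-population SRS variance formula $\Var(\tau_S)=\frac{1}{K}\cdot\frac{n-K}{n-1}\cdot\sigma_\tau^2$, where $\sigma_\tau^2=(n-1)^{-1}\sum_{i=1}^{n}(\tau_i-\tau)^2$. Assumption \ref{asu: normal bounded outcomes} gives $|\tau_i|=|Y_i(\bone)-Y_i(\bzero)|\leq 2c$, so $\sigma_\tau^2\leq 4c^2=O(1)$, and the stated $O\bigl(K^{-1}\cdot(n-K)/n\bigr)$ bound follows immediately (absorbing the harmless $(n-1)/n$ factor).

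For part (iii), I would apply Theorem \ref{theo: classic resutls of diff}(ii) conditionally on $S$ to write $\Var(\hat{\tau}\mid S)\leq V_S^2(\bone)/|S_1|+V_S^2(\bzero)/|S_0|$, discarding the non-positive contribution $-V_S^2(\tau_S)/|S|$. Assumption \ref{asu: normal bounded outcomes} yields $V_S^2(\bone),V_S^2(\bzero)\leq 4c^2=O(1)$ uniformly in $S$. Since $|S|=K$ with $|S_1|=\lfloor K/2\rfloor$ and $|S_0|=K-|S_1|$, both are $\Theta(K)$, so $\Var(\hat{\tau}\mid S)=O(K^{-1})$ uniformly, and taking $\E_S$ of both sides preserves this bound.

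Part (iv) is then an immediate combination: the MSE decomposition \eqref{eq: mse decomposition} gives $\MSE(\hat{\tau})=(\E_S(\tau_S)-\tau)^2+\Var_S(\tau_S)+\E_S(\Var(\hat{\tau}\mid S))$, whose first term vanishes by (i), whose second term is $O(K^{-1})$ by (ii), and whose third term is $O(K^{-1})$ by (iii). There is no real obstacle here; the proof is essentially bookkeeping. The only mild subtlety worth flagging explicitly is the need for a bound that holds \emph{uniformly in $S$} in step (iii), which is why the argument relies on the global boundedness of $Y_i(\cdot)$ from Assumption \ref{asu: normal bounded outcomes} rather than the decomposition-based Assumption \ref{asu: bounded potential outcomes}.
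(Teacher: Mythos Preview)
Your proposal is correct and essentially mirrors the paper's own proof, which simply cites the standard finite-population SRS identities (Lemma A3.1 in \cite{ding2024first}); you have just spelled out the details that the paper leaves to that reference.
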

\begin{proof}
    The results follow from Lemma A3.1 in \cite{ding2024first}.
\end{proof}

\end{document}